%% file: main.tex
\documentclass{article}
\usepackage[margin=2cm]{geometry}
\usepackage{graphicx}
\usepackage{booktabs}
\usepackage{tabularx}
\usepackage{geometry}
\usepackage{colortbl}
\usepackage{xcolor}
\usepackage{authblk}
\usepackage{natbib}
\usepackage{amsthm}
\newtheorem{proposition}{Proposition}

\newtheorem{lemma}{Lemma}

\usepackage{amsmath}
\usepackage{amssymb}
\usepackage{threeparttable}
\usepackage{booktabs}
\usepackage{algorithm}
\usepackage{algpseudocode}
\usepackage{tikz}
\usepackage{booktabs}
\usepackage{graphicx}
\usepackage{natbib}
 \bibpunct[, ]{(}{)}{,}{a}{}{,}%

\title{Portfolio Preference Elicitation in Institutional Crossing Markets}

\author{Yoontae Hwang\thanks{yoontae.hwang@pusan.ac.kr}}
\affil{Graduated School of Data Science, Pusan National University}
\date{May 2026}
\usepackage{amsthm}

\definecolor{pastelblue}{HTML}{4682B4} 

\usepackage{hyperref}
\hypersetup{
    colorlinks=true,
    citecolor=pastelblue, 
    linkcolor=black,
    urlcolor=blue
}

\usepackage{xcolor}

\definecolor{MSBlue}{HTML}{0072B2}    
\definecolor{MSOrange}{HTML}{E69F00}  
\definecolor{MSGreen}{HTML}{009E73}   
\definecolor{MSPurple}{HTML}{CC79A7}  
\definecolor{MSRed}{HTML}{D55E00}     
\definecolor{MSGray}{HTML}{4D4D4D}    

\begin{document}
\maketitle
\begin{abstract}
Institutional crossing platforms face a hidden-information problem: investors value trades as portfolios, but liquidity discovery is typically organized around individual securities. We model portfolio crossing as limited-communication preference elicitation over signed portfolio trades. The platform first uses price-directed demand queries to search the portfolio space and then verifies selected packages through value queries; an incumbent verification query records the demand-discovered allocation before further exploration. Final allocations are chosen from elicited reports, so the learning model guides queries but does not determine welfare. The analysis shows why search and verification are complementary. Demand queries locate high-value regions of a nonseparable portfolio space, but they provide only conservative welfare evidence unless selected packages are verified. Value queries provide exact welfare comparisons, but they are ineffective when applied to poorly targeted packages. Market-calibrated experiments using equity panels from the United States, Korea, Japan, and Germany show that demand-only and value-only designs recover only about half of full-information welfare under a limited query budget, whereas the hybrid procedure recovers 88\% and approaches 95\% as communication expands. We then compare exact security-level packages with factor-completed basket packages within the same allocation rule. Security-level packages are the unadjusted-efficiency mode when exact-securities disclosure is inexpensive. Factor-completed baskets become preferable when pretrade message informativeness is costly. The results characterize portfolio crossing as a selective verification problem and identify disclosure-sensitive package representation as a core design choice for hidden liquidity platforms.\end{abstract}

\input{sections/chap1_intro}


\input{sections/chap2_related}


\input{sections/chap3_method}

\input{sections/chap4_analysis} 

\input{sections/chap5_conclusion}

\bibliographystyle{plainnat}
\bibliography{main}

\clearpage
\appendix
\input{appendix/Proofs}
\input{appendix/detail}

\end{document}

%% file: sections/chap1_intro.tex
\section{Introduction}
\label{secIntro}

Institutional investors often trade portfolios rather than isolated stock orders. Empirical evidence on institutional trading treats a sequence of trades as an economically meaningful package, and intraquarter trading evidence shows that institutions actively adjust positions within short horizons \citep{chan1995behavior,puckett2011interim}. Hidden liquidity discovery, however, is usually organized around messages about individual securities. This creates a mismatch between the object being valued and the information unit being elicited. A trade that appears attractive in isolation may reduce welfare when it increases an existing exposure. Another leg may create value when it completes a hedge, relaxes a benchmark constraint, or offsets a liquidity cost. Transaction cost and dynamic trading models imply that optimal trades depend on current positions, trading targets, and the costs of moving across assets \citep{constantinides1986capital,garleanu2013dynamic}. The relevant object for hidden liquidity discovery is therefore a signed portfolio trade rather than a collection of separable single securities orders.

Block and upstairs markets allow large investors to obtain liquidity while limiting the exposure of large orders before execution \citep{seppi1990equilibrium,grossman1992informational}. Crossing networks and dark pools extend this logic to nondisplayed liquidity and venue choice \citep{hendershott2000crossing,zhu2014dark}. Market transparency can change liquidity and volatility, so message design is part of execution design \citep{madhavan1995consolidation,madhavan1996security}. Hidden order evidence further shows that traders adjust order exposure when execution probability and information leakage are both relevant \citep{bessembinder2009hidden,bloomfield2015hidden}. Portfolio trading creates a different design problem. The platform does not simply need to clear submitted orders. It must decide what limited portfolio level information to elicit before full demand is observed. The information architecture must recover enough nonseparable value information to identify mutually beneficial trades, while controlling the informativeness of pretrade messages.

We study this problem through query based portfolio crossing. Participants have private values over continuous signed equity portfolios. A demand query asks a participant to choose a trade at posted query prices. These queries identify active regions of portfolio demand. A value query asks for surplus on a selected package. These queries provide cardinal anchors for welfare comparisons. A bridge value query records the value of the interim allocation found after the demand query phase. The final allocation is selected from a finite report set. The surrogate model guides query selection, but final welfare is determined by elicited reports, demand query lower bounds, and a report based allocation rule.

The central design principle is that search and verification are different information tasks. Demand queries are useful because they use price directed responses to locate relevant regions of a high dimensional portfolio space. They are not sufficient for final welfare accounting because a demand response enters the allocation rule through a conservative lower bound unless it is verified. Value queries are useful because they turn selected packages into exact welfare candidates. They are not sufficient for search because exact values attached to poorly chosen packages do not reveal where mutually beneficial portfolio trades are likely to lie. Preference elicitation research provides the methodological basis for using selective queries rather than full valuation reports \citep{blum2004preference,lahaie2004applying}. Recent learning based auction designs further motivate using surrogate models to decide which packages should be queried under limited communication \citep{brero2018combinatorial,brero2019fast}. A hybrid protocol combines these two roles.

The procedure separates information architecture from package representation. We compare security level packages, denoted SL, with factor completed basket packages, denoted FC. SL packages use allocation relevant securities discovered from demand responses as the coordinates for later package queries. FC packages combine factor exposure targets with residual sleeves over the same demand discovered securities. Factor representations are natural in this setting because common exposures summarize economically relevant sources of portfolio risk \citep{ross1976arbitrage,fama1993common}. Empirical factor and shrinkage models support using a lower dimensional guide when the security universe is high dimensional \citep{connor1988risk,kozak2020shrinking}. Both package families use the same demand queries, bridge value query, value queries, surrogate search model, and report based allocation rule. They differ only in the package family used to form value query proposals.

The computational experiments use equity panels calibrated from the S\&P 500, KOSPI 200, Nikkei 225, and DAX. The results indicate that search and verification play distinct but complementary roles. Value only elicitation provides exact welfare information for the packages that are queried, but its performance depends on whether those packages are economically relevant. Demand only elicitation is more effective at locating relevant regions of the portfolio space, but demand reports enter the allocation rule through conservative lower bounds unless the associated packages are verified. Hybrid elicitation improves on both approaches because demand queries guide the search toward promising candidate packages, while value queries provide the cardinal information needed to compare those candidates in welfare terms. As the communication budget expands, this complementarity becomes more salient. The diagnostic experiment confirms the same mechanism. Demand queries can identify valuable candidate packages, but those packages support welfare comparisons only after selected candidates are verified.

The representation comparison gives a frontier rather than a universal ranking. Unadjusted efficiency refers to oracle relative welfare before accounting for disclosure costs. SL is the unadjusted efficiency representation when exact securities disclosure is inexpensive. FC has lower unadjusted efficiency, but it becomes the disclosure adjusted welfare frontier when pretrade message informativeness is penalized. In thin contra side liquidity regimes, same securities crossing weakens because direct opposing interest is scarce. Portfolio package representations can still create value by combining legs that offset each other through the portfolio curvature matrix.

This paper makes three contributions. First, it frames hidden liquidity discovery as a portfolio preference elicitation problem in which the relevant object is a signed portfolio trade rather than an individual order. Second, it develops a hybrid demand and value elicitation architecture for continuous equity portfolios. The analysis clarifies the distinct roles of demand reports, value reports, incumbent verification, and report based winner determination. Third, it treats package representation as an economic design variable. Security level packages and factor completed basket packages are compared within the same elicitation and allocation rule, which allows the analysis to separate unadjusted efficiency from disclosure adjusted welfare.

%% file: sections/chap2_related.tex
\section{Related Work}
\label{secRelated}

This paper connects four literatures. Market microstructure explains why institutions limit pretrade disclosure, but usually represents trading interest as a single order, a block, or a venue choice. Portfolio choice and execution explain why institutional trading values are nonseparable across securities, but usually take the desired portfolio trade as given. Combinatorial auction and preference elicitation research studies package allocation under limited communication, but mainly focuses on indivisible items. Work on portfolio auctions and basket trading shows that portfolio representation affects execution and information release. This paper combines these elements in a crossing procedure that elicits limited reports about continuous signed portfolio trades and treats package representation as a design variable.

\subsection{Hidden Liquidity and Financial Market Design}

For a crossing platform, the allocation problem is inseparable from the information problem. A large investor seeks liquidity, but revealing trading interest can expose private information and degrade execution quality. Classical market microstructure models show that order flow conveys information and that liquidity is affected by what traders reveal before execution \citep{kyle1985continuous,admati1991sunshine}. Models of block and upstairs trading show that hidden or intermediated trading can reduce the exposure cost of large orders \citep{seppi1990equilibrium,grossman1992informational}.

Transparency is not mechanically welfare improving in these settings. Fragmented and less transparent markets can reduce the cost of executing large orders when traders face repeated exposure of their intentions \citep{madhavan1995consolidation,madhavan1996security}. Empirical work on institutional executions shows that execution costs depend on order size, investment style, and the trading mechanism used to access liquidity \citep{keim1995anatomy,chan1997institutional}. Evidence from upstairs markets further shows that negotiated and intermediated mechanisms can support large block trades while interacting with displayed markets \citep{keim1996upstairs,madhavan1997search}.

A separate empirical literature studies nondisplayed liquidity directly. Hidden and reserve order evidence shows that traders choose exposure jointly with price, quantity, and execution risk \citep{bessembinder2009hidden,buti2013undisclosed}. Experimental and venue sorting evidence shows that dark trading changes order submission and venue selection when traders face different combinations of immediacy and information leakage \citep{bloomfield2015hidden,menkveld2017shades}. Work on dark pools and fragmentation studies how nondisplayed trading interacts with displayed markets and price discovery \citep{degryse2015impact,zhu2014dark}. Recent evidence on broker networks highlights the continuing role of intermediated search in institutional liquidity provision \citep{buti2022diving,han2024institutional}.

A related market design literature studies how trading protocols shape liquidity and market clearing. Frequent batch auctions restructure the timing of exchange trading to address speed competition \citep{budish2015hft}. Flow Trading allows participants to submit user defined linear combinations of assets and clears the resulting portfolio demand in discrete time \citep{budish2023flow}. The present paper studies a different margin. Flow Trading asks how a market should clear submitted portfolio demand. Query based portfolio crossing asks what portfolio information a hidden liquidity platform should elicit before full demand is observed. The contribution is therefore an information architecture for selective portfolio preference elicitation rather than a clearing rule for fully submitted portfolio orders.

\subsection{Portfolio Trading and Multi Asset Execution}

Models of portfolio execution typically solve an individual trader's optimization problem given a fixed objective, target position, or predefined basket. A crossing platform faces a multi agent inference problem. It must learn enough about several private portfolio objectives to identify mutually beneficial combinations. Portfolio choice theory establishes why this inference is nonseparable because a trade changes expected return, risk, benchmark deviation, and constraint slack at the portfolio level \citep{markowitz1952portfolio,grinold1999active}. Factor based asset pricing gives an economic basis for summarizing portfolio risk through common exposures rather than through securities independently \citep{ross1976arbitrage,fama1993common}.

Empirical factor models and shrinkage methods reinforce the value of low dimensional representations in high dimensional equity settings \citep{connor1988risk,kozak2020shrinking}. Multi asset execution models generate related nonseparability because the value of one leg depends on the other legs in the basket and on the path of execution \citep{almgren2001optimal,bertsimas1998optimal}. Portfolio optimization with transaction costs and constraints gives the same implication for rebalancing because trades are chosen relative to the current portfolio and the feasible region \citep{constantinides1986capital,lobo2007portfolio}. Dynamic trading models show that partial adjustment is optimal when expected returns are predictable and trading is costly \citep{garleanu2013dynamic}.

Institutional trading behavior reinforces this portfolio view. Large investors do not only decide whether to buy or sell a standalone stock. Their trading decisions reflect package size, investment style, transaction costs, hedge completion, and rebalancing needs \citep{chan1995behavior,puckett2011interim}. A purchase can be unattractive when it increases an existing risk exposure. A sale can be valuable when it reduces benchmark deviation or completes a hedge. Evidence on downward sloping demand curves also suggests that securities are not perfect substitutes in the short run \citep{shleifer1986demand,wurgler2002arbitrage}. The appropriate unit for liquidity discovery is therefore a signed portfolio trade. This turns portfolio trading into an information elicitation problem because the platform observes only selected responses to selected queries.

\subsection{Preference Elicitation and Combinatorial Auctions}

Combinatorial auction theory studies allocation when valuations are nonseparable across items. Package bidding allows participants to express synergies, but full preference revelation imposes communication and computation costs \citep{ausubel2002ascending,cramton2006combinatorial}. Efficient allocation can require substantial communication when preferences are unrestricted, and practical auction languages therefore impose structure on reports \citep{nisansegal2006communication,bichler2023fuel}. Preference elicitation addresses this problem by using selective queries to approximate efficient allocations without exhaustive valuation reports \citep{conen2001preference,blum2004preference}.

The query design literature distinguishes value queries, demand queries, and learning based methods for selecting which information to ask. Learning based elicitation uses observed responses to focus later queries on allocation relevant regions \citep{lahaie2004applying,brero2018combinatorial}. Bayesian and machine learning based iterative auctions show that statistical models can improve query selection while final allocations can still be based on reported information \citep{brero2019fast,soumalias2025prices}. This distinction is central to the present paper because the surrogate model guides query selection, but the final allocation is selected from elicited reports and valid lower bounds.

The financial crossing setting differs from standard combinatorial auction domains in several respects. Packages are signed trade vectors rather than discrete bundles. Feasibility includes gross exposure limits, securities level bounds, and residual external execution costs. Messages have pretrade information content because query responses may reveal alpha, inventory pressure, benchmark demand, or hedging motives. This information content links preference elicitation to market microstructure rather than to communication cost alone \citep{kyle1985continuous,madhavan1996security}. It also creates scope for strategic exposure costs when other traders can condition on revealed trading needs \citep{brunnermeier2005predatory}.

\subsection{Portfolio Representation, Baskets, and Information Release}

Research on portfolio auctions and basket trading shows that a basket is not a neutral container. Its composition determines what risk is transferred, how liquidity demand reaches underlying securities, and what information is revealed during execution. Blind portfolio auctions highlight how intermediaries manage information release during large portfolio sales \citep{padilla2012intermediated}. ETF and index arbitrage research shows that creation and redemption baskets transmit liquidity demand between fund shares and underlying securities \citep{bendavid2018etfs,pan2019etf}. ETF activity can also affect price discovery, liquidity, and return comovement in underlying securities \citep{glosten2021etf,madhavan2016price}. Evidence on ETF ownership and arbitrage shows that basket based trading can change information production and asset return correlations \citep{da2018exchange,israeli2017dark}.

This literature motivates the representation margin in the crossing platform's design. Security level packages specify exact securities and quantities. Factor completed basket packages combine factor exposure targets with residual completion over demand discovered securities. The factor component builds on the idea that common exposures summarize priced and economically relevant risk \citep{ross1976arbitrage,fama1993common}. The residual sleeve preserves participant specific active demand that may not be captured by the factor guide \citep{connor1988risk,kozak2020shrinking}. Existing research often treats portfolio representation as a fixed trading object or an information release device. This paper treats it as a design variable in the elicitation protocol.

%% file: sections/chap3_method.tex
\section{Model and Elicitation Architecture}
\label{secMethod}

We model institutional crossing as a signed multi asset allocation problem. Participant \(i\)'s trade is a signed vector \(d_i\in\mathbb R^m\), where positive coordinates represent purchases and negative coordinates represent sales. A package is a queried or reported trade vector. An allocation is a collection of packages assigned across participants. The portfolio level formulation is necessary because risk, benchmark deviation, liquidity exposure, and hedge completion create complementarities across trade legs.

Let \(\mathcal J=\{1,\ldots,m\}\) denote the stock universe and let \(\mathcal I=\{1,\ldots,n\}\) denote the participant set. Feasible trades are restricted to a convex set \(\mathcal X_i\subset\mathbb R^m\) that contains the no trade vector. This set captures gross exposure limits, securities level bounds, mandate restrictions, and short sale constraints. A pure internal crossing allocation satisfies \(\sum_i d_i=0\). When internal crossing leaves an imbalance, the residual vector is \(\xi=-\sum_i d_i\), and the platform incurs residual external execution cost \(\Psi(\xi)\). The baseline specification is \(\Psi(\xi)=\xi^\top\Gamma\xi/2\), with \(\Gamma\succ0\).

Participant \(i\)'s private surplus is the net improvement in a risk adjusted trading objective. Let \(\tau_i=b_i-h_i\) denote the distance between target holdings and current holdings. Let \(\alpha_i\) represent alpha, urgency, or execution needs. Let \(H_i\succeq0\) capture curvature from risk, liquidity, and mandates. The primitive surplus from trade \(d\) is \(\alpha_i^\top d-(d-\tau_i)^\top H_i(d-\tau_i)/2+\tau_i^\top H_i\tau_i/2\). This yields the reduced form valuation
\begin{equation}
 v_i(d)=\theta_i^\top d-\frac{1}{2}d^\top H_i d,
 \qquad
 \theta_i=\alpha_i+H_i\tau_i,
 \qquad
 H_i=\lambda_i\Sigma+\gamma_i\Delta+\rho_i I .
\label{eqValueModel}
\end{equation}
Here \(\Sigma\) is the return covariance matrix, \(\Delta\) is a diagonal liquidity or idiosyncratic cost matrix, and \(I\) provides residual mandate curvature. 

The full information benchmark is the welfare maximizing allocation observed by an omniscient platform
\begin{equation}
W^\star
=
\max_{\{d_i\in\mathcal X_i\}_{i\in\mathcal I}}
\left\{
\sum_{i\in\mathcal I}v_i(d_i)
-
\Psi\left(-\sum_{i\in\mathcal I}d_i\right)
\right\}.
\label{eqOracle}
\end{equation}
For any realized allocation \(\hat d\), oracle relative welfare efficiency is
\begin{equation}
\operatorname{Eff}(\hat d)=\frac{\sum_{i\in\mathcal I} v_i(\hat d_i)-\Psi\left(-\sum_{i\in\mathcal I}\hat d_i\right)}{W^\star}.
\label{eqEfficiency}
\end{equation}
Because the no trade allocation is feasible and yields zero welfare, \(W^\star\ge0\).

\begin{proposition}[Portfolio complementarity]
\label{propPortfolioComb}
Suppose \(v_i(d)=\theta_i^\top d-d^\top H_i d/2\) and \(H_i\succeq0\). For any two trade increments \(g\) and \(h\), the joint surplus satisfies \(v_i(g+h)-v_i(g)-v_i(h)=-g^\top H_i h\). The increments are complementary for participant \(i\) when \(g^\top H_i h<0\) and substitutable when \(g^\top H_i h>0\).
\end{proposition}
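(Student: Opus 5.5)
The identity is a direct polarization computation on the quadratic valuation in \eqref{eqValueModel}, and the plan is to expand each term and cancel. First write out
\[
v_i(g+h)=\theta_i^\top(g+h)-\tfrac12(g+h)^\top H_i(g+h),
\]
and use bilinearity of the quadratic form:
\[
(g+h)^\top H_i(g+h)=g^\top H_i g+h^\top H_i h+g^\top H_i h+h^\top H_i g .
\]
Since \(H_i=\lambda_i\Sigma+\gamma_i\Delta+\rho_i I\) is symmetric (and in any case only its symmetric part enters a quadratic form), the two cross terms coincide. Hence \(-\tfrac12\) times the cross part equals \(-g^\top H_i h\).

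Second, subtract \(v_i(g)=\theta_i^\top g-\tfrac12 g^\top H_i g\) and \(v_i(h)=\theta_i^\top h-\tfrac12 h^\top H_i h\). The linear terms cancel by linearity of \(d\mapsto\theta_i^\top d\), and the diagonal quadratic terms cancel. What remains is exactly \(-g^\top H_i h\). This gives the displayed identity for arbitrary increments \(g,h\in\mathbb R^m\); positive semidefiniteness is not needed here.

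Third, the classification is then definitional. Call the increments complementary when the joint surplus \(v_i(g+h)-v_i(g)-v_i(h)\) is strictly positive, and substitutable when it is strictly negative. This matches the combinatorial-auction notions of superadditivity and subadditivity, with \(v_i(0)=0\) serving as the normalization. The sign of the joint surplus is the opposite of the sign of \(g^\top H_i h\), which yields both cases.

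As a remark, with \(H_i\succeq0\) the form \(\langle g,h\rangle_{H_i}=g^\top H_i h\) is a semi-inner product. Complementarity therefore means the two legs are negatively aligned in the curvature geometry, i.e., they hedge each other.

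There is no real obstacle. The only point needing care is the symmetry of \(H_i\), which justifies merging the cross terms. I would state it explicitly from the construction of \(H_i\), or else note that replacing \(H_i\) by \((H_i+H_i^\top)/2\) leaves \(v_i\) unchanged.
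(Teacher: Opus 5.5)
Your argument is correct and is the same as the paper's: expand \(v_i(g+h)\) by bilinearity, cancel the linear and own-quadratic terms against \(v_i(g)+v_i(h)\), and read the classification off the sign of \(-g^\top H_i h\). You also state explicitly two points the paper leaves implicit: the symmetry of \(H_i\) is what merges the two cross terms, and \(H_i\succeq0\) plays no role in the identity itself.
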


Proposition \ref{propPortfolioComb} provides the portfolio accounting identity behind nonseparable package values. The result shows that complementarity does not require indivisible assets. It follows from cross terms in portfolio risk and execution curvature. Same direction trades in correlated securities tend to be substitutable when they concentrate exposure. Long short packages can be complementary when one leg offsets the risk or cost exposure created by another leg.

\subsection{Package Representations}
\label{secQueryFamilyConstruction}

The representation step determines how query objects are converted into feasible signed trades. This step is separate from the allocation rule. Security level packages use demand discovered securities as package coordinates. Factor completed basket packages combine factor exposure targets with residual completion over the same demand discovered securities. This separation allows SL and FC packages to enter the same report based allocation problem while differing only in the representation used to generate value query proposals.

The elicitation process begins with demand queries that identify participant specific allocation relevant securities. These are securities whose absolute demand query quantities exceed a materiality threshold or rank among the participant's largest reported positions. Let \(S_i\subset\mathcal J\) denote this set, and let \(E_{S_i}\) embed the selected coordinates into the full stock universe.

Security level packages use these allocation relevant securities directly. The query family is
\begin{equation}
\mathcal Q_i^{\mathrm{SL}}
=
\left\{
E_{S_i}w
\mid
w\in\mathcal W_i^{\mathrm{SL}}
\right\},
\qquad
\mathcal W_i^{\mathrm{SL}}
\subseteq
\left\{
 w
 \mid
 E_{S_i}w\in\mathcal X_i
\right\}.
\label{eqSecurityLevelFamily}
\end{equation}
This representation retains exact security level expressiveness in the value query phase.

Factor completed basket packages construct packages through a lower dimensional factor representation with residual completion. Representing portfolios through common factors is consistent with arbitrage pricing theory and empirical factor models \citep{ross1976arbitrage,connor1988risk}. Using a lower dimensional guide is also consistent with modern cross sectional shrinkage methods for large equity panels \citep{fama1993common,kozak2020shrinking}. Let the covariance matrix be approximated by \(\Sigma\approx F\Omega F^\top+\Delta\), where \(F\in\mathbb R^{m\times k}\) and \(k\ll m\). Let \(K=\Sigma+\nu\Delta+\rho I\) be a positive definite metric. Given a factor exposure target \(\eta\in\mathbb R^k\), the minimum cost factor matching portfolio solves \(\min_a a^\top K a/2\) subject to \(F^\top a=\eta\), with solution
\begin{equation}
a(\eta)=A\eta, \quad A=K^{-1}F(F^\top K^{-1}F)^{-1}.
\label{eqAtomMatrix}
\end{equation}
A security level trade \(u=E_{S_i}w\) may introduce unintended factor exposure.
The factor completion map finds the closest portfolio under metric \(K\) that matches
the target exposure \(\eta\):
\begin{equation}
C(u,\eta)=A\eta+(I-AF^\top)u, \quad R_K=I-AF^\top .
\label{eqCompletion}
\end{equation}
The factor completed query family is
\begin{equation}
\mathcal Q_i^{\mathrm{FC}}
=
\left\{
A\eta
+
R_K E_{S_i}w
\mid
(\eta,w)\in\mathcal Y_i^{\mathrm{FC}}
\right\},
\qquad
\mathcal Q_i^{\mathrm{FC}}\subseteq\mathcal X_i .
\label{eqFactorCompletedFamily}
\end{equation}
For polyhedral feasibility constraints \(\mathcal X_i=\{d\in\mathbb R^m\mid L_i d\le r_i\}\), an admissible set is \(\mathcal Y_i^{\mathrm{FC}}=\{(\eta,w)\mid L_i(A\eta+R_KE_{S_i}w)\le r_i,\ \eta\in\mathcal Z_i,\ w\in\mathcal W_i^{\mathrm{FC}}\}\). For general convex constraints, \(\mathcal Y_i^{\mathrm{FC}}\) is the preimage of \(\mathcal X_i\) under the factor completion map.

The two query families define different package representations for the same report based allocation problem. SL packages use demand discovered securities as package coordinates. FC packages augment those coordinates with factor exposure targets and the completion map \(R_K\). Demand queries, the bridge value query, value reports, and the final winner determination problem are common across SL and FC.

\subsection{Demand Queries and Surrogate Search}
\label{secQueryEstimation}

A demand query presents a hypothetical price vector \(p^\ell\in\mathbb R^m\). Under the signed trade convention, \(p_j^\ell\) is a per share shadow execution cost for stock \(j\). A higher \(p_j^\ell\) reduces the attractiveness of positive demand in securities \(j\) and increases the attractiveness of negative demand. Participant \(i\) responds by choosing
\begin{equation}
 d_i^\ell
 \in
 \arg\max_{d\in\mathcal X_i}
 \left\{
 v_i(d)
 -
 {p^\ell}^\top d
 \right\}.
\label{eqDQ}
\end{equation}
This response implies the revealed preference inequality \(v_i(d_i^\ell)-{p^\ell}^\top d_i^\ell\ge v_i(d)-{p^\ell}^\top d\) for all \(d\in\mathcal X_i\). Under the quadratic model, an interior response satisfies \(\theta_i-H_i d_i^\ell=p^\ell\). With active constraints, the condition becomes the variational inequality \((\theta_i-H_i d_i^\ell-p^\ell)^\top(d-d_i^\ell)\le0\) for all \(d\in\mathcal X_i\).

Query prices are restricted to a low dimensional basis that evolves across demand query rounds. Initially, the platform uses \(\Phi^0=[F,G]\), where \(G\) is a baseline liquidity component. As demand responses reveal allocation relevant securities, the basis is augmented with those
security coordinates:
\begin{equation}
p^\ell=\Phi^\ell\kappa^\ell, \quad \Phi^\ell=[F,G,E_{C^\ell}].
\label{eqTimeVaryingPriceBasis}
\end{equation}
The predicted crossing problem with surrogate valuations \(\hat v_i\) and residual execution has dual objective \(\mathfrak C(p)=\sum_i U_i(p)+\Psi^\ast(-p)\), where \(U_i(p)=\max_{d\in\mathcal X_i}\{\hat v_i(d)-p^\top d\}\). When \(\Psi(\xi)=\xi^\top\Gamma\xi/2\), any predicted optimal demand \(\hat d_i(p)\) gives the subgradient \(-\sum_i\hat d_i(p)+\Gamma^{-1}p\). The projected subgradient step over prices of the form \(p=\Phi^\ell\kappa\) is
\begin{equation}
\kappa^{\ell+1} = \kappa^\ell + \eta_\ell {\Phi^\ell}^{\top} \left[ \sum_i\hat d_i(\Phi^\ell\kappa^\ell) - \Gamma^{-1}\Phi^\ell\kappa^\ell \right].
\label{eqPriceUpdate}
\end{equation}
The projected dual prices are query prices rather than final market clearing prices for a fully submitted order book. They guide participants toward portfolio regions that are likely to be allocation relevant, while the implemented allocation remains report based. The platform maintains a structural surrogate valuation model \(\hat v_i(d,\vartheta_i)=\beta_i^\top d-\lambda_i d^\top\Sigma d/2-\gamma_i d^\top\Delta d/2-\rho_i\|d\|_2^2/2\), with \(\lambda_i,\gamma_i,\rho_i\ge0\), and parameter vector \(\vartheta_i=(\beta_i,\lambda_i,\gamma_i,\rho_i)\). Demand queries supply projected marginal moments. If \(\mathcal X_i=\{d\in\mathbb R^m\mid L_i d\le r_i\}\), let \(\mathcal A_i^\ell\) be the active constraint set at \(d_i^\ell\), let \(\mathcal T_i^\ell=\{s\in\mathbb R^m\mid (L_i)_{\mathcal A_i^\ell}s=0\}\), and let \(P_i^\ell\) be the orthogonal projection onto \(\mathcal T_i^\ell\). The projected moment is \(P_i^\ell[\beta_i-\lambda_i\Sigma d_i^\ell-\gamma_i\Delta d_i^\ell-\rho_i d_i^\ell-p^\ell]\approx0\).

Value queries provide cardinal surplus observations. If package \(q_i^\ell\) is queried and value \(\bar v_i^\ell=v_i(q_i^\ell)\) is reported, the estimator solves
\begin{equation}
\begin{aligned}
\mathcal L_i^{\mathrm{dq}}(\vartheta_i)
&=
\sum_{\ell\in\mathcal R_i^{\mathrm{dq}}}
\left\|
P_i^\ell
\left[
\beta_i
-
\lambda_i\Sigma d_i^\ell
-
\gamma_i\Delta d_i^\ell
-
\rho_i d_i^\ell
-
p^\ell
\right]
\right\|_2^2,\\
\mathcal L_i^{\mathrm{vq}}(\vartheta_i)
&=
\sum_{\ell\in\mathcal R_i^{\mathrm{vq}}}
\left[\hat v_i(q_i^\ell,\vartheta_i)-\bar v_i^\ell\right]^2,\\
\min_{\vartheta_i\,\mid\,\lambda_i,\gamma_i,\rho_i\ge0}
\quad
&
\mathcal L_i^{\mathrm{dq}}(\vartheta_i)
+
\omega_{\mathrm{vq}}\mathcal L_i^{\mathrm{vq}}(\vartheta_i)
+
\omega_{\mathrm{reg}}\|\vartheta_i\|_2^2 .
\end{aligned}
\label{eqMixedLearning}
\end{equation}
The estimator is used to select subsequent queries. Final welfare is evaluated from the elicited report set.

\subsection{Report-Based Allocation and Welfare Certification}
\label{secBridge}

The final allocation is selected from explicitly elicited reports. The no trade package is always available with value zero. A demand query response satisfies revealed preference against no trade and therefore provides a conservative value bound. Each demand query report is recorded with a lower bound \(\underline v_i^\ell\le v_i(d_i^\ell)\). Value query reports record surplus for selected packages. Let \(\mathcal R_i\) collect participant \(i\)'s reports and let \(\mathcal E_i(\mathcal R_i)\) be the finite set of reported packages together with no trade. The inferred report value is
\begin{equation}
\tilde v_i(q,\mathcal R_i)
=
\begin{cases}
 v_i(q) & \text{if }(q,v_i(q))\in\mathcal R_i^{\mathrm{vq}},\\
 \max\{\underline v_i^\ell\mid q=d_i^\ell,\ (d_i^\ell,p^\ell)\in\mathcal R_i^{\mathrm{dq}}\} & \text{if }q\text{ appeared in a demand query},\\
 0 & \text{if }q=0,\\
 -\infty & \text{otherwise.}
\end{cases}
\label{eqInferredValue}
\end{equation}
The final allocation solves the report based winner determination problem
\begin{equation}
a^\star(\mathcal R)
\in
\arg\max_{\{q_i\in\mathcal E_i(\mathcal R_i)\}_{i\in\mathcal I}}
\left\{
\sum_i\tilde v_i(q_i,\mathcal R_i)
-
\Psi\left(-\sum_i q_i\right)
\right\}.
\label{eqReportWDP}
\end{equation}
When \(\Psi(\xi)=\xi^\top\Gamma\xi/2\), the finite report based problem can be written with binary package selection variables. Let \(\mathcal E_i(\mathcal R_i)=\{q_{ir}\mid r\in\mathcal K_i\}\), let \(y_{ir}\in\{0,1\}\) indicate the selected package, and let \(\tilde v_{ir}=\tilde v_i(q_{ir},\mathcal R_i)\). The finite problem is
\begin{align}
\max_{\{y_{ir}\}}
\quad
&
\sum_{i\in\mathcal I}\sum_{r\in\mathcal K_i}\tilde v_{ir}y_{ir}
-
\frac{1}{2}
\left(
\sum_{i\in\mathcal I}\sum_{r\in\mathcal K_i}q_{ir}y_{ir}
\right)^\top
\Gamma
\left(
\sum_{i\in\mathcal I}\sum_{r\in\mathcal K_i}q_{ir}y_{ir}
\right)\\
\text{s.t.}
\quad
&
\sum_{r\in\mathcal K_i}y_{ir}=1,
\qquad
 i\in\mathcal I,\\
&
 y_{ir}\in\{0,1\},
\qquad
 i\in\mathcal I,
 r\in\mathcal K_i .
\end{align}
The first welfare certification step is an accounting result. Demand query reports are useful for search, but they enter the final allocation problem through valid lower bounds unless they are verified. Value query reports turn selected packages into exact welfare candidates. The bridge value query applies this accounting structure to the interim allocation found after the demand query phase.

For an allocation \(a=(q_i)_{i\in\mathcal I}\) selected from the finite report set, define true welfare by \(W(a)=\sum_{i\in\mathcal I} v_i(q_i)-\Psi(-\sum_{i\in\mathcal I}q_i)\), and define reported welfare by \(\widetilde W(a\mid\mathcal R)=\sum_{i\in\mathcal I}\tilde v_i(q_i,\mathcal R_i)-\Psi(-\sum_{i\in\mathcal I}q_i)\). Let \(\mathcal A(\mathcal R)=\prod_{i\in\mathcal I}\mathcal E_i(\mathcal R_i)\) be the finite report allocation set. Let \(\mathcal A^{\mathrm{exact}}(\mathcal R)\subseteq\mathcal A(\mathcal R)\) denote the set of allocations for which every selected participant package is exact valued, with the no trade package counted as exact valued with value zero.

\begin{lemma}[Exact-valued candidate preservation]
\label{lemExactCandidatePreservation}
Assume that every inferred report value is a valid lower bound on the participant's true value. Let \(a^{\mathrm{final}}\in\arg\max_{a\in\mathcal A(\mathcal R)}\widetilde W(a\mid\mathcal R)\). Then \(W(a^{\mathrm{final}})\ge\max_{a\in\mathcal A^{\mathrm{exact}}(\mathcal R)} W(a)\). If \(a^{\mathrm{dq}}\) is the interim allocation after the demand query phase and the bridge value query records \(v_i(a_i^{\mathrm{dq}})\) for every selected package, then subsequent report based allocation satisfies \(W(a^{\mathrm{final}})\ge W(a^{\mathrm{dq}})\).
\end{lemma}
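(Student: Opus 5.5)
The argument is a sandwich inequality between true and reported welfare, using that the residual cost \(\Psi(-\sum_i q_i)\) depends only on the packages and is identical in \(W\) and \(\widetilde W\).

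\textbf{Step 1: reported welfare never exceeds true welfare.} Fix any \(a=(q_i)\in\mathcal A(\mathcal R)\). By hypothesis \(\tilde v_i(q_i,\mathcal R_i)\le v_i(q_i)\) for every \(i\). The \(-\infty\) branch of \eqref{eqInferredValue} never arises, because \(q_i\in\mathcal E_i(\mathcal R_i)\) is a reported package or no trade. Summing over \(i\) and subtracting the common \(\Psi\) term gives
\[
\widetilde W(a\mid\mathcal R)\le W(a)\quad\text{for all } a\in\mathcal A(\mathcal R).
\]

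\textbf{Step 2: equality on exact-valued allocations.} For \(a\in\mathcal A^{\mathrm{exact}}(\mathcal R)\), every \(q_i\) is either value-query reported or no trade. In both cases \(\tilde v_i(q_i,\mathcal R_i)=v_i(q_i)\), with \(v_i(0)=0\). Hence \(\widetilde W(a\mid\mathcal R)=W(a)\).

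One subtlety must be handled here. A package that appears both as a demand-query response and as a value-query package must receive its exact value. The first case of \eqref{eqInferredValue} takes precedence in the displayed definition. Even if one instead reads the definition as a maximum over applicable cases, the exact value dominates, since every lower bound satisfies \(\underline v_i^\ell\le v_i(q)\). Either reading therefore gives \(\tilde v_i=v_i\).

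\textbf{Step 3: first inequality.} Let \(a^{\mathrm{final}}\) maximize \(\widetilde W(\cdot\mid\mathcal R)\) over \(\mathcal A(\mathcal R)\). Then for any \(a\in\mathcal A^{\mathrm{exact}}(\mathcal R)\),
\[
W(a^{\mathrm{final}})\ge \widetilde W(a^{\mathrm{final}}\mid\mathcal R)\ge \widetilde W(a\mid\mathcal R)= W(a).
\]
The first step uses Step 1, the second uses optimality of \(a^{\mathrm{final}}\), and the third uses Step 2. The set \(\mathcal A^{\mathrm{exact}}(\mathcal R)\) is finite and nonempty, since the all-no-trade allocation belongs to it, so the maximum exists. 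Taking the maximum over \(a\) yields the first claim.

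\textbf{Step 4: bridge claim.} This is the step needing the most care, because report sets grow over time. After the bridge value query, each package \(a_i^{\mathrm{dq}}\) is recorded with its exact value \(v_i(a_i^{\mathrm{dq}})\), or it is no trade. So \(a^{\mathrm{dq}}\in\mathcal A^{\mathrm{exact}}(\mathcal R)\) for the report set at the bridge.

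Subsequent elicitation only appends reports, so \(\mathcal R_i\) grows monotonically. I need that a package verified earlier keeps exact status after later reports arrive. This holds because value-query entries are never removed, and by the precedence argument in Step 2 the inferred value of a value-queried package stays exact under any additional demand-query lower bounds. Hence \(a^{\mathrm{dq}}\in\mathcal A^{\mathrm{exact}}(\mathcal R)\) for every later report set \(\mathcal R\).

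Applying Step 3 with \(a=a^{\mathrm{dq}}\) then gives \(W(a^{\mathrm{final}})\ge W(a^{\mathrm{dq}})\).
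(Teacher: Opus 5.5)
Your proposal is correct and follows the paper's own proof: on exact-valued allocations reported welfare equals true welfare, report-based optimality of $a^{\mathrm{final}}$ gives $\widetilde W(a^{\mathrm{final}}\mid\mathcal R)\ge W(a)$, and lower-bound validity gives $W(a^{\mathrm{final}})\ge\widetilde W(a^{\mathrm{final}}\mid\mathcal R)$. The bridge claim then follows because $a^{\mathrm{dq}}\in\mathcal A^{\mathrm{exact}}(\mathcal R)$. Your extra remarks are careful additions that the paper leaves implicit: the precedence of the exact case in the inferred value, nonemptiness of $\mathcal A^{\mathrm{exact}}(\mathcal R)$ via the all-no-trade allocation, and persistence of exact status as reports accumulate.
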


Lemma \ref{lemExactCandidatePreservation} gives the accounting role of the bridge value query. The demand query phase may identify a high welfare allocation, but conservative lower bound accounting may undervalue it. The bridge value query records exact surplus for the selected packages. Subsequent value query exploration can then enlarge the report set without removing the verified incumbent from the set of exact valued welfare candidates.

The next result adds the search component. The surrogate dual objective is used to choose query prices. To state the welfare certificate against the full information benchmark, define the corresponding true value dual by \(U_i^v(p)=\max_{d\in\mathcal X_i}\{v_i(d)-p^\top d\}\) and \(\mathfrak C^v(p)=\sum_{i\in\mathcal I}U_i^v(p)+\Psi^*(-p)\), where \(\Psi^*(z)=\sup_{\xi\in\mathbb R^m}\{z^\top\xi-\Psi(\xi)\}\) is the convex conjugate of the residual execution cost. The superscript on \(U_i^v\) emphasizes that this is the economic dual formed from the true valuation \(v_i\), rather than the surrogate dual used for query selection.

\begin{proposition}[Demand-query dual certificate]
\label{propDemandDualCertificate}
Assume that \(\Psi\) is proper, closed, and convex and that each \(\mathcal X_i\) is nonempty. For any price vector \(p\in\mathbb R^m\) such that \(U_i^v(p)\) and \(\Psi^*(-p)\) are finite, \(W^\star\le \mathfrak C^v(p)\). Let \(a\in\mathcal A^{\mathrm{exact}}(\mathcal R)\) be any exact valued allocation available in the final report set. Under valid lower bound accounting,
\begin{equation}
W^\star-W(a^{\mathrm{final}})
\le \mathfrak C^v(p)-W(a) \quad \text{for every }p\in\mathbb R^m .
\label{eqGeneralDualCertificate}
\end{equation}
In particular, after the bridge value query verifies the demand discovered interim allocation \(a^{\mathrm{dq}}\),
\begin{equation}
W^\star-W(a^{\mathrm{final}})
\le
\mathfrak C^v(p)-W(a^{\mathrm{dq}})
\qquad
\text{for every }p\in\mathbb R^m .
\label{eqBridgeDualCertificate}
\end{equation}
Suppose further that \(d_i(p)\in\arg\max_{d\in\mathcal X_i}\{v_i(d)-p^\top d\}\) is the participant's demand response at a common query price \(p\), let \(d(p)=(d_i(p))_{i\in\mathcal I}\), and define \(\xi(p)=-\sum_{i\in\mathcal I}d_i(p)\). If the bridge value query records exact values for the package profile \(d(p)\), then
\begin{equation}
W^\star-W(a^{\mathrm{final}}) \le \mathfrak C^v(p)-W(d(p)) = \Psi(\xi(p))+\Psi^*(-p)+p^\top\xi(p).
\label{eqDQDualGapGeneral}
\end{equation}
If \(\Psi(\xi)=\xi^\top\Gamma\xi/2\) with \(\Gamma\succ0\), this bound becomes
\begin{equation}
W^\star-W(a^{\mathrm{final}})
\le
\frac{1}{2}
\left(
-\sum_{i\in\mathcal I}d_i(p)+\Gamma^{-1}p
\right)^\top
\Gamma
\left(
-\sum_{i\in\mathcal I}d_i(p)+\Gamma^{-1}p
\right).
\label{eqDQDualGapQuadratic}
\end{equation}
\end{proposition}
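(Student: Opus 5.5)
The plan is to prove the four claims in sequence: weak duality first, then the general certificate, then its demand-response specialization, and finally the quadratic case.

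\emph{Weak duality.} Fix \(p\) with \(U_i^v(p)\) and \(\Psi^*(-p)\) finite, and take any feasible profile \(d_i\in\mathcal X_i\). Write \(\xi=-\sum_i d_i\). Adding and subtracting \(p^\top\sum_i d_i=-p^\top\xi\) gives
\[
\sum_i v_i(d_i)-\Psi(\xi)=\sum_i\big(v_i(d_i)-p^\top d_i\big)+\big((-p)^\top\xi-\Psi(\xi)\big).
\]
Each bracketed term is bounded by \(U_i^v(p)\), and the last term is bounded by \(\Psi^*(-p)\) by the definition of the conjugate. Taking the supremum over feasible profiles yields \(W^\star\le\mathfrak C^v(p)\). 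If the finiteness conditions fail, \(\mathfrak C^v(p)=+\infty\), because \(\Psi^*>-\infty\) for proper \(\Psi\) and \(U_i^v>-\infty\) for nonempty \(\mathcal X_i\). The inequality then holds trivially, so it holds for every \(p\in\mathbb R^m\).

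\emph{General certificate.} Lemma \ref{lemExactCandidatePreservation} gives \(W(a^{\mathrm{final}})\ge W(a)\) for every \(a\in\mathcal A^{\mathrm{exact}}(\mathcal R)\). Subtracting this from \(W^\star\le\mathfrak C^v(p)\) gives \eqref{eqGeneralDualCertificate}. Once the bridge value query has recorded exact values for \(a^{\mathrm{dq}}\), that allocation lies in \(\mathcal A^{\mathrm{exact}}\), and \eqref{eqBridgeDualCertificate} follows as the special case \(a=a^{\mathrm{dq}}\).

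\emph{Demand-response specialization.} Take \(a=d(p)\). Because each \(d_i(p)\) is a maximizer, \(U_i^v(p)=v_i(d_i(p))-p^\top d_i(p)\). Summing over \(i\) and using \(-\sum_i p^\top d_i(p)=p^\top\xi(p)\) gives
\[
\mathfrak C^v(p)-W(d(p))=p^\top\xi(p)+\Psi^*(-p)+\Psi(\xi(p)),
\]
which is \eqref{eqDQDualGapGeneral}. This quantity is the Fenchel--Young gap and is nonnegative. I expect the main subtlety here: finiteness of \(\Psi(\xi(p))\) and the existence of maximizers must be assumed rather than derived. The statement supplies the maximizers through its hypothesis, and proper \(\Psi\) with a finite value at \(\xi(p)\) is needed for the identity to be meaningful. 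In the quadratic case both hold automatically.

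\emph{Quadratic case.} For \(\Psi(\xi)=\xi^\top\Gamma\xi/2\), the conjugate is \(\Psi^*(z)=z^\top\Gamma^{-1}z/2\), obtained by maximizing a strictly concave quadratic with maximizer \(\xi=\Gamma^{-1}z\). Hence
\[
\Psi(\xi)+\Psi^*(-p)+p^\top\xi=\tfrac12\xi^\top\Gamma\xi+p^\top\xi+\tfrac12 p^\top\Gamma^{-1}p=\tfrac12(\xi+\Gamma^{-1}p)^\top\Gamma(\xi+\Gamma^{-1}p),
\]
and the right side expands back to the left because \(\Gamma\) is symmetric. Substituting \(\xi=-\sum_i d_i(p)\) gives \eqref{eqDQDualGapQuadratic}.
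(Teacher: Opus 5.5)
Your proof is correct and follows the paper's argument step for step. You get weak duality by inserting $p^\top\sum_i d_i$ and bounding the participant terms by $U_i^v(p)$ and the residual term by $\Psi^*(-p)$, use Lemma~\ref{lemExactCandidatePreservation} for the certificate, use $U_i^v(p)=v_i(d_i(p))-p^\top d_i(p)$ to obtain the Fenchel--Young gap, and complete the square in the quadratic case; your extra remark that $\mathfrak C^v(p)=+\infty$ when the finiteness conditions fail is left implicit in the paper and justifies the ``for every $p\in\mathbb R^m$'' clause.
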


Proposition \ref{propDemandDualCertificate} formalizes the search role of demand queries. A query price \(p\) generates a dual upper bound on the full information crossing benchmark. A verified package profile generates a welfare lower bound that is preserved by Lemma \ref{lemExactCandidatePreservation}. Their difference is a certificate of the final allocation's welfare loss. In the implemented algorithm, the bridge value query verifies \(a^{\mathrm{dq}}\), so \eqref{eqBridgeDualCertificate} is the operative certificate. The closed form residual gap in \eqref{eqDQDualGapQuadratic} applies to the special case in which the verified package profile is the simultaneous demand response at price \(p\). Thus demand queries provide price directed search, while value queries provide the accounting step needed to convert discovered packages into certified welfare candidates.

After the bridge round, the platform issues model guided value queries from the predicted crossing problem over the selected query family. Let \(\chi\in\{\mathrm{SL},\mathrm{FC}\}\) denote the representation. The predicted allocation is
\begin{equation}
\hat a^\ell
\in
\arg\max_{\{q_i\in\mathcal Q_i^{\chi}\}_{i\in\mathcal I}}
\left\{
\sum_i \hat v_i(q_i)
-
\Psi\left(-\sum_iq_i\right)
\right\}.
\label{eqPredictedVQAllocation}
\end{equation}
Participants report \(v_i(\hat a_i^\ell)\) for these model guided proposals. The allocation rule remains the same under SL and FC. The selected representation determines only the package family used for value query proposals.

\begin{algorithm}
\caption{Query Based Portfolio Crossing Procedure}
\label{algDemandDiscovered}
\begin{algorithmic}[1]
\State Estimate \(\Sigma\), \(F\), \(\Delta\), \(\Gamma\), and \(K=\Sigma+\nu\Delta+\rho I\) from market data
\State Select package family \(\chi\in\{\mathrm{SL},\mathrm{FC}\}\)
\State Initialize demand query reports \(\mathcal R^{\mathrm{dq}}\), value query reports \(\mathcal R^{\mathrm{vq}}\), and allocation relevant securities set \(C^1=\emptyset\)
\For{\(\ell=1,\ldots,Q_{\mathrm{dq}}\)}
    \State Form \(\Phi^\ell=[F,G,E_{C^\ell}]\) and set \(p^\ell=\Phi^\ell\kappa^\ell\)
    \For{each participant \(i\)}
        \State Receive \(d_i^\ell\in\arg\max_{d\in\mathcal X_i}\{v_i(d)-{p^\ell}^\top d\}\)
        \State Add \((d_i^\ell,p^\ell)\) to \(\mathcal R_i^{\mathrm{dq}}\)
    \EndFor
    \State Estimate \(\hat v_i(\cdot,\vartheta_i)\) using \eqref{eqMixedLearning}
    \State Update \(\kappa^{\ell+1}\) using the projected subgradient step described above
    \State Update \(C^{\ell+1}\) from demand query responses observed through round \(\ell\)
\EndFor
\State Compute \(a^{\mathrm{dq}}\) from \eqref{eqReportWDP}
\For{each participant \(i\)}
    \State Ask the bridge value query \(v_i(a_i^{\mathrm{dq}})\)
    \State Add \((a_i^{\mathrm{dq}},v_i(a_i^{\mathrm{dq}}))\) to \(\mathcal R_i^{\mathrm{vq}}\)
\EndFor
\For{\(\ell=1,\ldots,Q_{\mathrm{vq}}-1\)}
    \State Extract participant specific allocation relevant securities sets \(S_i\) from demand query reports
    \State Construct \(\mathcal Q_i^\chi\) using \eqref{eqSecurityLevelFamily} or \eqref{eqFactorCompletedFamily}
    \State Estimate \(\hat v_i(\cdot,\vartheta_i)\) using all reports
    \State Solve \eqref{eqPredictedVQAllocation} over \(\mathcal Q_i^\chi\)
    \For{each participant \(i\)}
        \State Ask \(v_i(\hat a_i^\ell)\)
        \State Add \((\hat a_i^\ell,v_i(\hat a_i^\ell))\) to \(\mathcal R_i^{\mathrm{vq}}\)
    \EndFor
\EndFor
\State Return \(a^{\mathrm{final}}\) from \eqref{eqReportWDP} using all reports
\end{algorithmic}
\end{algorithm}

%% file: sections/chap4_analysis.tex
\section{Calibrated Market Design Experiments}
\label{secExperiments}

The computational experiments evaluate query based portfolio crossing in calibrated equity market environments. The objective is to measure how much welfare a disclosure sensitive platform can recover when it observes selected reports about signed portfolio trades. The primary outcome is oracle relative welfare efficiency, reported as \(100\times\operatorname{securities}{Eff}\), where \(\operatorname{securities}{Eff}\) is defined in \eqref{eqEfficiency}. The benchmark is the full information allocation in \eqref{eqOracle}. The implemented allocation is chosen from the finite report set through \eqref{eqReportWDP}. The surrogate estimator in \eqref{eqMixedLearning} guides query selection and does not directly determine final welfare.

The experiments are organized around two design questions. The first question concerns information architecture. Demand queries are price directed questions of the form \eqref{eqDQ} that reveal locally optimal portfolio trades and identify active regions of demand. Value queries ask for surplus on selected packages and provide cardinal welfare comparisons for the report based allocation rule. The bridge value query records the demand discovered allocation with exact values and implements the accounting logic in Lemma \ref{lemExactCandidatePreservation}. The second question concerns package representation. SL uses demand discovered securities as in \eqref{eqSecurityLevelFamily}. FC uses the factor completion map in \eqref{eqCompletion} and the query family in \eqref{eqFactorCompletedFamily}. Oracle relative efficiency before accounting for disclosure costs is referred to as unadjusted efficiency. SL is the exact securities mode when pretrade securities disclosure is inexpensive. FC is the basket mode that reduces message informativeness when pretrade disclosure is costly.

\subsection{Data, Calibration, and Inference}
\label{secExpData}

This subsection fixes the empirical environment used in the main comparisons. The purpose is to make comparisons across elicitation architectures and representations occur within the same market instances rather than across different simulated economies. The main parameters are held fixed in the baseline runs so that changes in outcomes can be attributed to the protocol being compared. Appendix Tables \ref{tabAppParticipantCalibration},report the calibration and query design details.

The empirical inputs are constructed from four equity panels, the S\&P 500, KOSPI 200, Nikkei 225, and DAX. The U.S. panel uses CRSP daily stock files. The non U.S. panels use Compustat Global daily security files, with returns measured in local currency. Each market uses the most recent 252 trading days ending on a date with nondegenerate cross sectional return dispersion. Securities must satisfy an 80\% return coverage screen, and the retained universe contains the largest eligible securities by terminal market capitalization. The retained universe and factor dimension are fixed before protocol comparisons are run. We use the largest eligible securities and the leading covariance factors to create a common institutional trading environment across the four panels. Holding this environment fixed ensures that differences in outcomes come from the elicitation architecture or package representation, rather than from changes in the market inputs supplied to the platform.

The covariance input is estimated from winsorized returns, shrunk toward the diagonal, and projected onto the positive semidefinite cone. Liquidity curvature is \(\Delta=\operatorname{diag}(1/\ell_j)\), where \(\ell_j\) is terminal market capitalization divided by the market median and truncated below at 0.05. Factor exposures are the leading principal components of the covariance matrix. The atom matrix \(A\) and residual projection \(R_K\) are constructed as in \eqref{eqAtomMatrix} and \eqref{eqCompletion}. These fixed inputs are used by both SL and FC, which ensures that the representation comparison isolates package design rather than differences in the market data supplied to the platform.

Each matched cell contains an institutional liquidity pool spanning five trading motives: benchmark rebalancing, active alpha demand, long-short residual demand, basket trading, and dealer inventory pressure. The same participant profile sequence and private-value shocks are then used across protocols within the cell. The profile determines covariance curvature, liquidity sensitivity, gross exposure, and securities level limits. Private values are generated from \eqref{eqValueModel}. Target trades combine a common factor component, a residual securities component, and profile specific alpha or inventory pressure. Contra side liquidity, denoted \(\zeta_{\mathrm{contra}}\), governs how much direct opposing interest exists in the market. Low values move participants toward a common side. High values create richer offsetting interest. The main architecture experiments keep this environment fixed, while Table \ref{tabContraSideLiquidity} varies the contra side liquidity regime.
\begin{figure}[t]
\centering
\includegraphics[width=0.96\linewidth]{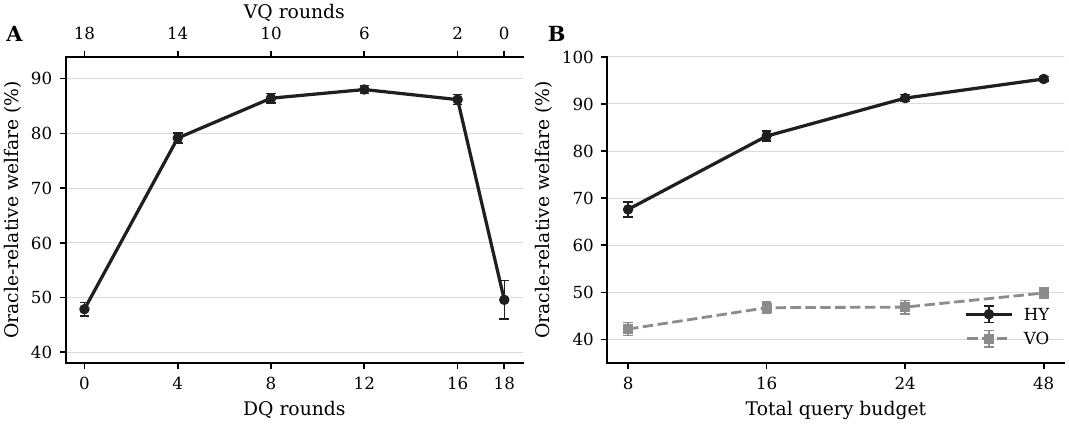}
\caption{Search and verification architecture under limited communication}
\label{figMainArchitecture}
\vspace{0.25em}
\begin{minipage}{0.92\linewidth}
\footnotesize
Notes. The left panel fixes the total budget at 18 queries and varies the demand query rounds before switching to value queries. The right panel varies the total query budget and compares HY with VO. Vertical intervals are stratified bootstrap 95\% half widths. HY denotes hybrid demand and value elicitation. VO denotes value only elicitation.
\end{minipage}
\end{figure}
All comparisons use matched market seed cells. A cell fixes the market panel, retained securities, participant profile sequence, primitive shocks, contra side liquidity regime, and oracle benchmark. Unless otherwise noted, each reported mean uses 200 matched cells. Interval columns and figure intervals report the half width of a 95\% stratified bootstrap interval based on 9,999 replications. Pairwise claims use within cell differences, and the family of comparisons discussed within a table panel is adjusted by Holm's step down procedure. When no bootstrap draw exceeds the observed centered statistic, the reported result is \(p<0.001\). The compact labels used below are HY for hybrid demand and value elicitation, VO for value only elicitation, BR for bridge value query inclusion, NB for no bridge value query, SN for same securities crossing, NC for no completion, LR for low rank factor construction, and SH for shuffled factor construction.

\subsection{Search and Verification Architecture}
\label{secExpArchitecture}
This subsection tests the information architecture. The experiment asks whether welfare recovery comes from exact reports alone or from the combination of price directed search and exact verification. Figure \ref{figMainArchitecture} reports the two comparisons that determine the baseline architecture. The first panel fixes the total communication budget at 18 queries and varies the point at which the protocol switches from demand queries to value queries. The second panel varies the total query budget and compares hybrid elicitation with value only elicitation. Table \ref{tabBridgeAccounting} then isolates the bridge value query under tight value query budgets. The total query budget is a disclosure constraint rather than a tuning parameter. Figure \ref{figMainArchitecture} reports the full search-verification trade-off under the 18-query environment and shows that interior hybrid designs dominate the one-sided designs. We use the 12-demand-query and 6-value-query design as the baseline representative of this interior region; the substantive comparison is the complementarity between search and verification, not the optimality of a particular integer split. The bridge round is counted as one of the value-query rounds, so bridge and no-bridge designs face the same total communication budget.

Figure \ref{figMainArchitecture} shows that neither search nor verification is sufficient by itself. Value only elicitation obtains exact reports but recovers only 47.84\% of oracle welfare because it explores the high dimensional package space without price directed guidance. Demand only elicitation recovers 49.56\% because it locates active trading regions but relies on conservative lower bound accounting in \eqref{eqInferredValue}. The best interior split reaches 88.01\%, and the adjacent splits with 8 and 16 demand query rounds reach 86.43\% and 86.17\%. The pattern identifies a division of labor between the query types. Demand queries identify the portfolio regions in which mutually beneficial trades are likely to exist. Value queries provide the cardinal comparisons needed by the report based allocation rule. This distinction is consistent with the preference elicitation logic in combinatorial allocation, where demand and value queries provide different information about nonseparable values \citep{conen2001preference,brero2017probably,bichler2023fuel,soumalias2025prices}.

The scaling evidence in Figure \ref{figMainArchitecture} strengthens this interpretation. Hybrid elicitation rises from 67.60\% at 8 queries to 95.29\% at 48 queries. Value only elicitation rises from 42.22\% to 49.86\% over the same range. The hybrid advantage increases from 25.39 to 45.43\%age points, with a mean paired advantage of 37.90\%age points and Holm adjusted \(p<0.001\). Additional communication is valuable for the hybrid protocol because demand query moments improve the direction of surrogate search and value query observations anchor the welfare levels of the packages proposed by that search. Additional communication is less valuable for value only elicitation because exact values are attached to packages that are not guided by the price directed structure of portfolio demand.

Table \ref{tabBridgeAccounting} evaluates the empirical value of the bridge accounting logic. The bridge design exceeds the no bridge design by 5.61 to 7.03\%age points across lower bound fractions from 0.10 to 0.50. The mean paired advantage is 6.33\%age points with Holm adjusted \(p<0.001\). The bridge value query is useful because it changes the accounting status of the demand discovered allocation. Once that allocation is recorded as an exact value report, later value query exploration through \eqref{eqPredictedVQAllocation} can enlarge the report set while retaining the exact valued incumbent as a feasible option in the final report based allocation problem.

\begin{table}[t]
\caption{Bridge value query preservation under tight value query budgets}
\label{tabBridgeAccounting}
\centering
\footnotesize
\begin{tabular}{rrrr}
\toprule
DQ lower bound fraction & BR Eff. \(\%\) & NB Eff. \(\%\) & BR minus NB pp \\
\midrule
0.10 & 81.58 (1.05) & 75.97 (0.61) & 5.61 \\
0.20 & 81.82 (1.01) & 75.93 (0.61) & 5.89 \\
0.35 & 82.76 (1.15) & 75.96 (0.61) & 6.80 \\
0.50 & 83.01 (1.12) & 75.98 (0.61) & 7.03 \\
\bottomrule
\end{tabular}

\vspace{0.4em}
\begin{minipage}{0.92\linewidth}
\footnotesize
Notes. Parentheses report stratified bootstrap 95\% half widths. pp denotes\%age points. BR denotes bridge value query inclusion. NB denotes no bridge value query.
\end{minipage}
\end{table}
\subsection{Diagnostic Decomposition: Demand-Query Search and Report Accounting}
\label{secExpDiagnostic}

Table \ref{tabMainDqDiagnostic} examines why the hybrid architecture improves on demand only elicitation. The table evaluates demand discovered candidate packages under alternative accounting rules. The first row isolates the search component by asking how valuable the demand-discovered candidate set is before report-accounting restrictions are imposed. The remaining rows apply the report-based accounting rules available to the platform. The next rows use revealed preference lower bound accounting. The final rows add bridge valuation either within the same communication budget or with one additional query. The table is a mechanism diagnostic for the difference between locating valuable packages and using them as welfare evidence in the report based allocation rule.

\begin{table}[t]
\caption{Demand query search and valuation accounting}
\label{tabMainDqDiagnostic}
\centering
\footnotesize
\begin{tabular}{lrrr}
\toprule
Accounting specification & Total queries & Eff. \(\%\) & 95\% h.w. \\
\midrule
Primitive valuation of demand discovered candidates & 18 & 96.71 & 0.43 \\
Conservative revealed preference bound & 18 & 48.04 & 4.06 \\
Moderate revealed preference bound & 18 & 52.42 & 4.09 \\
High revealed preference bound & 18 & 56.67 & 4.00 \\
Full price revealed preference bound & 18 & 59.35 & 3.96 \\
Bridge valuation within fixed communication & 18 & 62.42 & 3.26 \\
Bridge valuation with expanded communication & 19 & 62.71 & 3.53 \\
\bottomrule
\end{tabular}
\vspace{0.4em}
\begin{minipage}{0.92\linewidth}
\footnotesize
Notes. The first row evaluates the same demand discovered candidate set at primitive values. The remaining rows use report based accounting rules that differ in how much of the revealed preference price information is admitted before exact value verification.
\end{minipage}
\end{table}

The diagnostic shows that demand queries are strong search instruments but incomplete welfare evidence. The primitive-value row asks how valuable the demand-discovered candidate set would be if the platform could evaluate those candidates directly. It recovers 96.71\% of full-information welfare, showing that demand queries locate economically relevant packages. The report-based rows then impose the information actually available to the platform. Under conservative revealed-preference accounting, the same candidate set supports only 48.04\%, and even the full-price revealed-preference bound supports 59.35\%. Incumbent verification raises the restricted diagnostic allocation to 62.42\% within the fixed communication budget.

This 62.42\% number should be read as an accounting diagnostic, not as the performance of the full hybrid protocol. The full hybrid result of 88.01\% in Figure \ref{figMainArchitecture} uses the demand-query phase to guide later model-guided value queries, which expand the exact-valued report set before final allocation. Table \ref{tabMainDqDiagnostic} instead holds the candidate set fixed to isolate why unverified demand responses cannot by themselves support welfare-comparable report-based allocation. The comparison therefore explains the mechanism behind the main result: search finds promising packages, while verification and subsequent value-query exploration make those packages usable for welfare allocation.

\subsection{Security Level Packages and Package Size Restrictions}
\label{secExpModes}

This subsection evaluates the exact securities operating mode before imposing disclosure penalties. The purpose is to distinguish SL packages from mechanical same securities crossing. SN restricts internal matching to direct offsetting interest in the same security. SL packages use demand discovered securities as package coordinates and then allocate through \eqref{eqReportWDP}. Table \ref{tabMainSecurity} compares crossing modes in the low leakage environment using unadjusted efficiency and then imposes minimum package size restrictions that mimic block style filters. Residual \(L_1\) measures the unmatched position after internal crossing, and crossed notional measures the absolute internal allocation size.

\begin{table}[t]
\caption{Security level packages and package size restrictions}
\label{tabMainSecurity}
\centering
\footnotesize
\begin{tabular}{lrrr}
\toprule
\multicolumn{4}{l}{\textit{Panel A. Crossing modes}} \\
Label & Eff. \(\%\) & Residual \(L_1\) & Crossed notional \\
\midrule
SL & 97.25 (0.86) & 0.60 & 4.11 \\
FC & 88.85 (0.86) & 0.96 & 3.45 \\
SN & 94.69 (0.55) & 0.00 & 3.67 \\
VO & 46.38 (1.41) & 0.69 & 2.46 \\
\midrule
\multicolumn{4}{l}{\textit{Panel B. Minimum package size restrictions}} \\
Label & Min. \(L_1\) & Eff. \(\%\) & Crossed notional \\
\midrule
SL & 0.00 & 97.25 (0.86) & 4.11 \\
FC & 0.00 & 88.85 (0.86) & 3.45 \\
SN & 0.00 & 94.69 (0.55) & 3.67 \\
SL & 0.25 & 97.25 (0.87) & 4.11 \\
FC & 0.25 & 88.85 (0.82) & 3.45 \\
SN & 0.25 & 94.69 (0.54) & 3.67 \\
SL & 0.50 & 96.71 (1.08) & 4.11 \\
FC & 0.50 & 88.71 (0.86) & 3.48 \\
SN & 0.50 & 94.69 (0.55) & 3.67 \\
SL & 0.75 & 96.12 (1.35) & 4.14 \\
FC & 0.75 & 86.03 (1.10) & 3.63 \\
SN & 0.75 & 84.35 (1.59) & 2.93 \\
\bottomrule
\end{tabular}

\vspace{0.4em}
\begin{minipage}{0.92\linewidth}
\footnotesize
Notes. Parentheses report stratified bootstrap 95\% half widths. SL denotes security level packages. FC denotes factor completed basket packages. SN denotes same securities crossing. VO denotes value only elicitation.
\end{minipage}
\end{table}

Panel A shows that SL packages are the unadjusted efficiency leader in the exact securities environment. SL reaches 97.25\% efficiency, compared with 94.69\% for SN, 88.85\% for FC, and 46.38\% for VO. The SL advantage over SN is 2.56\%age points with Holm adjusted \(p<0.001\). This comparison matters because SN has zero residual in the table but lower welfare than SL. Exact clearing of the same securities is not equivalent to efficient portfolio crossing. SL can accept residual external execution costs when those costs are dominated by participant surplus, as represented in \eqref{eqReportWDP}. This distinction is economically important because institutional portfolio value depends on covariance risk, benchmark deviation, liquidity curvature, and hedge completion rather than on security by security imbalance alone \citep{markowitz1952portfolio,grinold1999active,almgren2001optimal}.

Panel B shows that SL remains stable when the platform requires larger packages. SL moves from 97.25\% at a zero threshold to 96.12\% at a threshold of 0.75. FC moves from 88.85\% to 86.03\%. SN is more sensitive at the highest threshold and falls from 94.69\% to 84.35\%. The interpretation is consistent with the package family in \eqref{eqSecurityLevelFamily}. Same securities crossing depends on exact opposing pairs and loses welfare when small direct matches are removed. SL uses the same securities but embeds them in elicited packages, so it retains larger feasible combinations even when block style filters are imposed.

\subsection{Representation Frontier}
\label{secExpFrontier}

This subsection treats representation as an operating-mode decision under costly pretrade communication. Unadjusted efficiency is the relevant criterion when exact-security disclosure is inexpensive. Disclosure-adjusted welfare is the relevant criterion when pretrade messages reveal costly information about securities and trade direction.

Let \(N_a\) denote the active participant-security content of representation \(a\):
\begin{equation}
N_a = \frac{1}{nm} \sum_{i=1}^{n} \sum_{j=1}^{m} \mathbf 1\{|d_{ij}^{a,\mathrm{final}}|>\epsilon\}.
\end{equation}
Let \(L_a(\omega_a)=\min\{1,\omega_a N_a\}\), where \(\omega_a\) is the informativeness of a unit of message content under representation \(a\). Exact security-level disclosure is normalized to \(\omega_{\mathrm{SL}}=1\). Disclosure-adjusted welfare is
\begin{equation}
\operatorname{AdjEff}^{pp}_a(c) = 100\times \operatorname{Eff}_a - 100cL_a(\omega_a).
\end{equation}
For any two representations \(a\) and \(b\),
\begin{equation}
\operatorname{AdjEff}^{pp}_a(c)-\operatorname{AdjEff}^{pp}_b(c)
= 100\{\operatorname{Eff}_a-\operatorname{Eff}_b\}
- 100c\{L_a(\omega_a)-L_b(\omega_b)\}.
\label{eqAdjEffDifference}
\end{equation}
If representation \(a\) has lower unadjusted efficiency but lower message informativeness than representation \(b\), the break-even disclosure cost is
\begin{equation}
c^\star_{a\mid b} = \frac{ \operatorname{Eff}_b-\operatorname{Eff}_a
}{ L_b(\omega_b)-L_a(\omega_a)},
\label{eqRepresentationBreakEven}
\end{equation}
provided \(L_b(\omega_b)>L_a(\omega_a)\). Representation \(a\) lies above \(b\) on the disclosure-adjusted frontier when \(c\ge c^\star_{a\mid b}\).

The representation comparison therefore gives a frontier rather than a universal ranking. SL is the unadjusted-efficiency mode because exact-security packages preserve the allocation-relevant securities discovered by demand queries. FC gives up some exact-security flexibility, but it reduces the informativeness of pretrade package messages by expressing trades through factor-exposure targets and completed residual sleeves. The relevant design question is therefore not whether FC dominates SL in unadjusted efficiency. It is whether the platform's cost of pretrade message informativeness is high enough for the disclosure savings of FC to compensate for its unadjusted-efficiency loss. The break-even condition in \eqref{eqRepresentationBreakEven} characterizes this region directly. Figure \ref{figMainFrontier} plots one normalization of this frontier; Appendix Table \ref{tabAppLeakageBreakEven} reports the corresponding break-even informativeness weights under alternative leakage metrics.

\begin{figure}[t]
\centering
\includegraphics[width=0.68\linewidth]{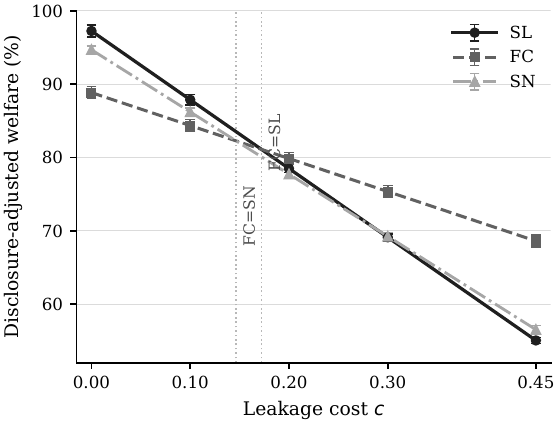}
\caption{Representation frontier under disclosure costs}
\label{figMainFrontier}
\vspace{0.25em}
\begin{minipage}{0.50\linewidth}
\footnotesize
Notes. Disclosure adjusted welfare equals \(100\times\operatorname{Eff}_a-100cL_a\). Vertical intervals are stratified bootstrap 95\% half widths. SL denotes security level packages. FC denotes factor completed basket packages. SN denotes same securities crossing.
\end{minipage}
\end{figure}

Figure \ref{figMainFrontier} shows a representation frontier rather than a universal ranking. At zero leakage cost, SL is highest at 97.25\%, followed by SN at 94.69\% and FC at 88.85\%. This ordering reflects the unadjusted efficiency cost of moving from exact securities packages to factor completed basket packages. When \(c=0.20\), FC becomes the disclosure adjusted welfare leader at 79.85\%, compared with 78.49\% for SL and 77.72\% for SN. At \(c=0.45\), FC reaches 68.60\%, while SL and SN reach 55.04\% and 56.52\%. Linear interpolation gives approximate crossover points of \(c\simeq0.172\) for FC relative to SL and \(c\simeq0.147\) for FC relative to SN. The result identifies the disclosure cost range in which the basket representation in \eqref{eqFactorCompletedFamily} becomes the disclosure adjusted welfare frontier.

\begin{table}[t]
\caption{Contra side liquidity and unadjusted representation efficiency}
\label{tabContraSideLiquidity}
\centering
\footnotesize
\begin{tabular}{lrrr}
\toprule
Label & \(\zeta_{\mathrm{contra}}\) & Eff. \(\%\) & Crossed notional \\
\midrule
SL & 0.25 & 72.58 (1.92) & 1.74 \\
FC & 0.25 & 74.47 (1.95) & 2.21 \\
SN & 0.25 & 52.81 (2.32) & 1.70 \\
SL & 0.50 & 82.69 (1.79) & 2.80 \\
FC & 0.50 & 81.38 (1.38) & 3.00 \\
SN & 0.50 & 78.16 (1.62) & 2.69 \\
SL & 0.75 & 92.26 (1.35) & 3.69 \\
FC & 0.75 & 86.31 (0.98) & 3.33 \\
SN & 0.75 & 92.02 (0.73) & 3.47 \\
SL & 1.00 & 95.26 (1.09) & 3.97 \\
FC & 1.00 & 88.63 (0.82) & 3.50 \\
SN & 1.00 & 94.58 (0.54) & 3.66 \\
\bottomrule
\end{tabular}

\vspace{0.4em}
\begin{minipage}{0.92\linewidth}
\footnotesize
Notes. Parentheses report stratified bootstrap 95\% half widths. SL denotes security level packages. FC denotes factor completed basket packages. SN denotes same securities crossing.
\end{minipage}
\end{table}

Table \ref{tabContraSideLiquidity} evaluates the same representations across market-thickness regimes, where thickness refers to the amount of direct opposing interest available to same-securities crossing. This comparison is not a tuning exercise; it identifies when exact-securities crossing is naturally favored and when portfolio packages create value by combining offsetting legs. In the thinnest contra side liquidity regime, SN recovers 52.81\%, while SL and FC recover 72.58\% and 74.47\%. At \(\zeta_{\mathrm{contra}}=0.50\), SL and FC are close, at 82.69\% and 81.38\%. At medium and high contra side liquidity, SL and SN lead in unadjusted efficiency. This pattern aligns with Proposition \ref{propPortfolioComb}. When exact same securities opposition is thin, value can still be created by portfolio packages whose legs offset each other through the curvature matrix \(H_i\). When opposing interest is thick and disclosure is inexpensive, exact securities representations remain the natural unadjusted efficiency modes.

\subsection{Robustness of the Factor Completed Basket Construction}
\label{secExpFamily}

This subsection evaluates whether factor completed basket packages are an internally coherent basket construction. The purpose is to test whether the completion map in \eqref{eqCompletion} improves on weaker basket alternatives under the same elicitation and allocation architecture. Table \ref{tabMainCompletionRobustness} has three roles. Panel A compares query family constructions. NC uses demand discovered securities without factor completion. LR uses a low rank factor basis. SH uses a shuffled factor basis that perturbs the economic content of the factor guide. Panel B varies the correlation regime through \(\Sigma_\alpha\). Panel C perturbs the surrogate inputs used for query selection while evaluating welfare under the true primitives.

\begin{table}[t]
\caption{Robustness of factor completed basket package construction}
\label{tabMainCompletionRobustness}
\centering
\footnotesize
\begin{tabular}{lrrr}
\toprule
\multicolumn{4}{l}{\textit{Panel A. Query family construction}} \\
Label & Eff. \(\%\) & 95\% h.w. & Rank \\
\midrule
SL & 95.60 & 0.99 & 1 \\
FC & 88.75 & 0.88 & 2 \\
LR & 85.20 & 1.05 & 3 \\
SH & 84.41 & 1.04 & 4 \\
NC & 81.92 & 1.13 & 5 \\
VO & 48.46 & 1.47 & 6 \\
\midrule
\multicolumn{4}{l}{\textit{Panel B. Factor completion across correlation regimes}} \\
Correlation \(\alpha\) & FC Eff. \(\%\) & NC Eff. \(\%\) & FC minus NC pp \\
\midrule
0.25 & 92.50 (0.85) & 86.32 (1.65) & 6.18 \\
0.65 & 90.03 (0.71) & 81.25 (1.32) & 8.78 \\
0.95 & 88.93 (0.90) & 81.43 (1.19) & 7.50 \\
\midrule
\multicolumn{4}{l}{\textit{Panel C. Surrogate input misspecification}} \\
Misspecification \(\mu\) & FC Eff. \(\%\) & NC Eff. \(\%\) & FC minus NC pp \\
\midrule
0.00 & 88.76 (0.82) & 81.48 (1.24) & 7.28 \\
0.10 & 88.64 (0.94) & 82.44 (1.20) & 6.20 \\
0.25 & 86.59 (0.91) & 81.66 (1.19) & 4.94 \\
0.50 & 82.39 (1.26) & 79.18 (1.48) & 3.22 \\
\bottomrule
\end{tabular}

\vspace{0.4em}
\begin{minipage}{0.92\linewidth}
\footnotesize
Notes. h.w. denotes stratified bootstrap 95\% half width. Parentheses in Panels B and C report the corresponding half widths. SL denotes security level packages. FC denotes factor completed basket packages. LR denotes low rank factor construction. SH denotes shuffled factor construction. NC denotes no completion. VO denotes value only elicitation.
\end{minipage}
\end{table}

Panel A shows that SL remains the unadjusted efficiency benchmark at 95.60\%. FC reaches 88.75\%, ahead of LR at 85.20\%, SH at 84.41\%, NC at 81.92\%, and VO at 48.46\%. The ordering supports the role of completion within the basket family. FC does not rely only on a lower dimensional factor summary. It combines the factor sleeve with a residual securities sleeve, which preserves participant specific active demand while controlling unintended factor exposure.

Panels B and C isolate this mechanism. Across correlation regimes, FC exceeds NC by 6.18 to 8.78\%age points. Under surrogate input misspecification, FC remains above NC at all reported levels, with a mean paired advantage of 5.41\%age points and Holm adjusted \(p<0.001\). The gap narrows at the highest misspecification level, which is consistent with the role of the surrogate in query selection. Since final allocation still uses \eqref{eqReportWDP}, misspecification affects which packages are queried rather than how elicited packages are valued. The evidence supports FC as a coherent basket representation relative to incomplete factor based alternatives.

The experiments support a search and verification interpretation of query based portfolio crossing. Demand queries identify active regions of signed portfolio demand, value queries verify selected packages, and the bridge value query records the demand discovered allocation as an exact valued candidate in the final report based allocation problem. Package representation is a distinct design margin. SL is the unadjusted efficiency mode when exact securities disclosure is inexpensive. FC is the basket mode that becomes preferred under the maintained disclosure adjusted welfare criterion when pretrade leakage costs are material and remains internally robust relative to incomplete basket constructions.

%% file: sections/chap5_conclusion.tex
\section{Conclusion}
\label{secConclusion}

This paper studies hidden liquidity discovery for institutional investors as a portfolio preference elicitation problem. Institutions evaluate signed portfolio trades rather than isolated single securities orders. The marginal value of a trade depends on the surrounding portfolio. A purchase can reduce welfare if it increases an existing exposure, while a sale can create value if it completes a hedge or reduces benchmark deviation. The object elicited by the platform is therefore a nonseparable valuation over signed portfolio trades.

Query based portfolio crossing separates search from verification. Demand queries use price directed responses to identify active regions of portfolio demand. Value queries record surplus for selected packages. The bridge value query adds exact values for the interim allocation selected after the demand query phase. The final allocation is chosen from the elicited report set and demand query lower bounds. The surrogate model selects queries and does not determine final welfare.

The formal analysis clarifies the accounting and representation logic behind this architecture. The quadratic surplus model provides a portfolio accounting identity for complementarities and substitutions across signed trade legs. The representation construction shows how security level and factor completed basket packages can be embedded in a common allocation problem. The report based allocation result clarifies why demand reports and value reports play different roles. Demand reports support search but enter final welfare comparisons through valid lower bounds unless verified. Value reports create exact welfare candidates. The bridge value query is the resulting incumbent verification step. Under valid lower bound accounting, later value query exploration retains the exact valued demand discovered allocation as a feasible welfare candidate in the final report based allocation problem.

The computational experiments support the search and verification interpretation. With an 18 query budget, value only and demand only designs recover 47.84\% and 49.56\% of oracle welfare, while the best hybrid split recovers 88.01\%. Across total query budgets, hybrid elicitation rises from 67.60\% to 95.29\%, while value only elicitation remains between 42.22\% and 49.86\%. A demand query diagnostic shows that demand discovered candidates can be highly valuable under primitive valuation, while conservative report accounting requires selective verification before those candidates can support final welfare comparisons.

The experiments also show that package representation is a design variable. Security level packages are the exact securities mode when pretrade securities disclosure is inexpensive. Factor completed basket packages are the portfolio basket mode. FC does not dominate exact securities packages in unadjusted efficiency. Instead, it becomes the disclosure adjusted welfare frontier once leakage costs are moderate or high. In thin contra side liquidity regimes, same securities crossing weakens sharply, while portfolio package representations retain welfare by combining legs that offset each other through portfolio curvature. The comparison is therefore a representation frontier rather than a universal ranking.

\appendix

%% file: appendix/Proofs.tex
\section{Proofs}
\label{appProofs}

\begin{proof}[Proof of Proposition \ref{propPortfolioComb}]
The reduced form valuation is \(v_i(d)=\theta_i^\top d-d^\top H_i d/2\). For two trade increments \(g\) and \(h\),
\begin{equation}
v_i(g+h)=
\theta_i^\top g+\theta_i^\top h-\frac{1}{2}g^\top H_i g-\frac{1}{2}h^\top H_i h
-g^\top H_i h .
\end{equation}
Subtracting \(v_i(g)\) and \(v_i(h)\) gives \(v_i(g+h)-v_i(g)-v_i(h)=-g^\top H_i h\). The sign of this cross term determines complementarity and substitutability.
\end{proof}

\begin{proof}[Proof of Lemma \ref{lemExactCandidatePreservation}]
Fix any \(a=(q_i)_{i\in\mathcal I}\in\mathcal A^{\mathrm{exact}}(\mathcal R)\). Since every selected package in \(a\) is exact valued,

\begin{equation}
\tilde v_i(q_i,\mathcal R_i)=v_i(q_i) \quad \text{for all }i\in\mathcal I.
\end{equation}
Because the residual execution cost \(\Psi\) is known and is evaluated in the same way in true and reported welfare,
\begin{equation}
\widetilde W(a\mid\mathcal R) =\sum_{i\in\mathcal I}\tilde v_i(q_i,\mathcal R_i) -\Psi\left(-\sum_{i\in\mathcal I}q_i\right) =W(a).
\end{equation}
By report based optimality,
\begin{equation}
\widetilde W(a^{\mathrm{final}}\mid\mathcal R) \ge \widetilde W(a\mid\mathcal R) = W(a).
\end{equation}
By lower bound validity,
\begin{equation}
\tilde v_i(q_i^{\mathrm{final}},\mathcal R_i) \le v_i(q_i^{\mathrm{final}})
\quad
\text{for all }i\in\mathcal I.
\end{equation}
Therefore,
\begin{equation}
\begin{aligned}
W(a^{\mathrm{final}}) &= \sum_{i\in\mathcal I}v_i(q_i^{\mathrm{final}})
- \Psi\left(-\sum_{i\in\mathcal I}q_i^{\mathrm{final}}\right)\\
&\ge \sum_{i\in\mathcal I}\tilde v_i(q_i^{\mathrm{final}},\mathcal R_i) - \Psi\left(-\sum_{i\in\mathcal I}q_i^{\mathrm{final}}\right)\\
&= \widetilde W(a^{\mathrm{final}}\mid\mathcal R).
\end{aligned}
\end{equation}
Combining the inequalities gives
\begin{equation}
W(a^{\mathrm{final}})
\ge \widetilde W(a^{\mathrm{final}}\mid\mathcal R) \ge W(a).
\end{equation}
The argument holds for every \(a\in\mathcal A^{\mathrm{exact}}(\mathcal R)\). Hence
\begin{equation}
W(a^{\mathrm{final}})
\ge
\max_{a\in\mathcal A^{\mathrm{exact}}(\mathcal R)} W(a).
\end{equation}

If the bridge value query records exact values for the interim demand query allocation \(a^{\mathrm{dq}}\), then \(a^{\mathrm{dq}}\in\mathcal A^{\mathrm{exact}}(\mathcal R)\) after the bridge reports are added. Applying the first part of the result gives
\begin{equation}
W(a^{\mathrm{final}}) \ge \max_{a\in\mathcal A^{\mathrm{exact}}(\mathcal R)} W(a) \ge W(a^{\mathrm{dq}}).
\end{equation}
\end{proof}

\begin{proof}[Proof of Proposition \ref{propDemandDualCertificate}]
First fix an arbitrary feasible allocation \(q=(q_i)_{i\in\mathcal I}\) with \(q_i\in\mathcal X_i\). Let
\begin{equation}
\xi_q=-\sum_{i\in\mathcal I}q_i
\end{equation}
be its residual execution vector. For any price vector \(p\), feasibility of this residual identity implies \(\xi_q+\sum_i q_i=0\). Hence
\begin{align}
\sum_i v_i(q_i)-\Psi(\xi_q) &=
\sum_i\{v_i(q_i)-p^\top q_i\}-\Psi(\xi_q)-p^\top\xi_q \\
&\le \sum_i U_i^v(p) + \sup_{\xi\in\mathbb R^m}\{-\Psi(\xi)-p^\top\xi\} \\
&= \sum_i U_i^v(p)+\Psi^*(-p) = \mathfrak C^v(p).
\end{align}
Taking the supremum over all feasible allocations gives weak duality,
\begin{equation}
W^\star\le \mathfrak C^v(p).
\end{equation}

Now let \(a\in\mathcal A^{\mathrm{exact}}(\mathcal R)\). Lemma \ref{lemExactCandidatePreservation} gives
\begin{equation}
W(a^{\mathrm{final}})\ge W(a).
\end{equation}
Combining this inequality with weak duality yields
\begin{equation}
W^\star-W(a^{\mathrm{final}})
\le
\mathfrak C^v(p)-W(a),
\end{equation}
which proves \eqref{eqGeneralDualCertificate}. If the bridge value query verifies \(a^{\mathrm{dq}}\), then \(a^{\mathrm{dq}}\in\mathcal A^{\mathrm{exact}}(\mathcal R)\), so substituting \(a=a^{\mathrm{dq}}\) proves \eqref{eqBridgeDualCertificate}.

It remains to specialize the bound to a simultaneous demand response at price \(p\). If \(d_i(p)\in\arg\max_{d\in\mathcal X_i}\{v_i(d)-p^\top d\}\), then
\begin{equation}
U_i^v(p)=v_i(d_i(p))-p^\top d_i(p).
\end{equation}
Let \(\xi(p)=-\sum_i d_i(p)\). Taking \(a=d(p)\) in \eqref{eqGeneralDualCertificate} gives
\begin{equation}
W^\star-W(a^{\mathrm{final}})
\le
\mathfrak C^v(p)-W(d(p)),
\end{equation}
and
\begin{align}
\mathfrak C^v(p)-W(d(p)) &= \sum_i\{v_i(d_i(p))-p^\top d_i(p)\}
+ \Psi^*(-p) - \left\{ \sum_i v_i(d_i(p))-\Psi(\xi(p)) \right\} \\
&= -p^\top\sum_i d_i(p)+\Psi^*(-p)+\Psi(\xi(p)) \\
&= \Psi(\xi(p))+\Psi^*(-p)+p^\top\xi(p).
\end{align}
This proves \eqref{eqDQDualGapGeneral}. The right hand side is nonnegative by Fenchel's inequality,
\begin{equation}
\Psi(\xi(p))+\Psi^*(-p)\ge (-p)^\top\xi(p).
\end{equation}

For the quadratic residual execution cost \(\Psi(\xi)=\xi^\top\Gamma\xi/2\), the convex conjugate is \(\Psi^*(z)=z^\top\Gamma^{-1}z/2\). Hence \(\Psi^*(-p)=p^\top\Gamma^{-1}p/2\), and
\begin{align}
\Psi(\xi(p))+\Psi^*(-p)+p^\top\xi(p)
&= \frac{1}{2}\xi(p)^\top\Gamma\xi(p) + \frac{1}{2}p^\top\Gamma^{-1}p
+p^\top\xi(p) \\
&= \frac{1}{2} \left(\xi(p)+\Gamma^{-1}p\right)^\top \Gamma \left(\xi(p)+\Gamma^{-1}p\right) \\
&= \frac{1}{2} \left( -\sum_i d_i(p)+\Gamma^{-1}p \right)^\top \Gamma \left( -\sum_i d_i(p)+\Gamma^{-1}p \right).
\end{align}
Combining this expression with \eqref{eqDQDualGapGeneral} proves \eqref{eqDQDualGapQuadratic}.
\end{proof}

%% file: appendix/detail.tex
\section{Experimental Calibration, Inference, and Robustness Details}
\label{appExpCalibration}

This appendix describes the empirical environment behind Section \ref{secExperiments}. The design holds the market environment fixed within matched cells and varies the elicitation architecture or the package representation. This structure ensures that differences in reported welfare come from the crossing procedure rather than from changes in the underlying economy.

\subsection{Market Environment}
\label{appExpData}

The experimental environment is constructed to preserve the allocation problem studied in Section \ref{secExperiments}. Portfolio crossing is valuable when participants face nonseparable values over signed trades, and this nonseparability depends on covariance, liquidity, and factor exposure. For this reason, the experiments use equity panels from the S\&P 500, KOSPI 200, Nikkei 225, and DAX rather than synthetic covariance matrices. Within each market, eligible securities are screened for return coverage and ranked by terminal market capitalization. The retained universe contains the largest liquid securities after this screen. This rule gives a comparable institutional trading universe across markets and avoids a design in which thinly traded tail securities drive the crossing outcome.

The baseline cell uses twenty securities, five factors, and eight participants. The market inputs are estimated from a one year daily return window. The retained return panel is used to estimate the covariance input \(\Sigma\). Returns are winsorized at the security level, the covariance guide is lightly regularized toward a diagonal structure, and the resulting matrix is projected to be positive semidefinite. Liquidity curvature is represented by \(\Delta=\operatorname{diag}(1/\ell_j)\), where \(\ell_j\) is a normalized market capitalization based liquidity measure. Factor exposures are the leading principal components of \(\Sigma\). The factor atom \(A\) and residual completion matrix \(R_K\) are then constructed as in \eqref{eqAtomMatrix} and \eqref{eqCompletion}.

These inputs are held fixed within matched market seed cells. A cell fixes the market panel, retained securities, participant profile sequence, primitive shocks, contra side liquidity, and oracle benchmark. The comparison between SL and FC therefore changes the package representation, not the market data supplied to the platform. The comparison between hybrid, value only, bridge, and no bridge designs likewise changes the elicitation architecture, not the underlying economy.

The communication environment is deliberately limited. The baseline uses eighteen queries per participant, and the query budget experiments vary this total to study how welfare recovery scales with communication. This restriction is part of the economic design rather than a computational convenience. A crossing platform must learn enough about portfolio values to allocate efficiently while limiting the amount of pretrade information revealed through messages.

This construction links the empirical design to Proposition \ref{propPortfolioComb}. The covariance matrix creates cross security curvature, the liquidity matrix creates name specific execution curvature, and the factor structure creates a low dimensional exposure space. Together, these features generate settings in which same name crossing, security level packages, and factor completed basket packages solve different allocation problems. Same name crossing can perform well when direct opposing interest is thick. SL can exploit exact name information when such disclosure is inexpensive. FC can preserve basket level information while reducing the informativeness of package messages.

\subsection{Participant Values and Feasible Trades}
\label{appParticipantCalibration}

The participant calibration creates institutional heterogeneity while preserving matched comparisons across protocols. Each matched cell contains eight participants drawn from five economic profiles. Indexers and ETF style participants generate benchmark and basket demand. Active managers generate sparse alpha motivated demand. Hedge accounts generate long short residual demand. Dealers generate inventory and liquidity intermediation demand. Table \ref{tabAppParticipantCalibration} reports the profile level parameters that affect feasible trades and curvature.

\begin{table}[t]
\caption{Participant primitive parameters}
\label{tabAppParticipantCalibration}
\centering
\footnotesize
\begin{tabular}{lrrrr}
\toprule
Profile \(g\) & \(\bar\lambda_g\) & \(\bar\gamma_g\) & Gross cap \(G_g\) & Name cap \(C_g\) \\
\midrule
Indexer & 2.60 & 0.36 & 1.30 & 0.16 \\
Active & 1.20 & 0.20 & 1.00 & 0.20 \\
Hedge & 2.10 & 0.18 & 1.15 & 0.22 \\
ETF & 1.40 & 0.30 & 1.35 & 0.18 \\
Dealer & 3.00 & 0.22 & 1.20 & 0.22 \\
\bottomrule
\end{tabular}

\vspace{0.4em}
\begin{minipage}{0.92\linewidth}
\footnotesize
Notes. Curvature parameters govern the relative importance of covariance risk and liquidity cost. Exposure caps define the feasible trade set in \eqref{eqAppFeasibleSet}.
\end{minipage}
\end{table}

For participant \(i\) with profile \(g_i\), feasible trades satisfy
\begin{equation}
\mathcal X_i
=
\left\{
 d\in\mathbb R^m
 \mid
 \|d\|_1\le G_{g_i},
 \ |d_j|\le C_{g_i}\text{ for }j=1,\ldots,m
\right\}.
\label{eqAppFeasibleSet}
\end{equation}
The no trade vector is feasible for every participant. Curvature is generated from the profile parameters and enters the value function through
\begin{equation}
H_i=\lambda_i\Sigma+\gamma_i\Delta+\rho I .
\label{eqAppCurvatureMatrix}
\end{equation}

Private target trades combine a factor component, a residual name component, and a profile specific trading motive. Contra side liquidity \(\zeta_{\mathrm{contra}}\) determines how much direct offsetting interest is present in the cell. Low values create thinner direct opposing interest and make direct same name matching more difficult. High values create richer opposing interest and make exact name crossing more favorable. The target trade is formed as
\begin{equation}
\tau_i
=
\Pi_{\mathcal X_i}\{Az_i+R_Ku_i\},
\label{eqAppTargetTrade}
\end{equation}
where \(z_i\) is a factor level demand shock, \(u_i\) is a sparse residual name shock, and \(\Pi_{\mathcal X_i}\) projects the candidate trade into the feasible set. Active managers receive sparse alpha shocks, while dealers receive inventory pressure shocks. Private values are generated by
\begin{equation}
\theta_i=H_i\tau_i+\alpha_i,
\qquad
v_i(d)=\theta_i^\top d-\frac{1}{2}d^\top H_id .
\label{eqAppValueGeneration}
\end{equation}

This structure creates two sources of allocation value. Common factor components create broad crossing opportunities, while residual name components create participant specific package value. The design therefore gives same name crossing, security level packages, and factor completed basket packages a common comparison environment. Same name crossing can perform well when direct opposition is thick. SL can exploit demand discovered securities when exact name disclosure is acceptable. FC can preserve basket level information while reducing the informativeness of messages.

\subsection{Query Construction and Report Accounting}
\label{appQueryConstruction}

The query design follows Algorithm \ref{algDemandDiscovered}. Demand queries are used first because they locate active regions of the portfolio demand surface. Value queries are then used because the final winner determination problem requires cardinal welfare comparisons. The default split uses the 18 query environment studied in Figure \ref{figMainArchitecture}. The bridge value query is counted as the first value query round, which makes the bridge comparison a communication matched comparison rather than an additional information grant.

Demand query reports are treated as search information unless they are later verified by a value query. If participant \(i\) chooses \(d_i^\ell\) at query price \(p^\ell\), revealed preference against no trade gives a conservative value bound. The recorded bound is
\begin{equation}
\underline v_i^\ell
=
\begin{cases}
\rho_{\mathrm{dq}}{p^\ell}^\top d_i^\ell & \text{if }{p^\ell}^\top d_i^\ell\ge0\\
{p^\ell}^\top d_i^\ell & \text{if }{p^\ell}^\top d_i^\ell<0 .
\end{cases}
\label{eqAppDQLowerBound}
\end{equation}
This rule lets demand reports guide search without allowing unverified demand responses to enter the final allocation as exact welfare observations. Value query reports record surplus, and the no trade package is included with value zero for every participant.

Demand query prices are updated in the projected dual space described in \eqref{eqTimeVaryingPriceBasis} and \eqref{eqPriceUpdate}. Early rounds use factor and liquidity directions. Later rounds augment the basis with securities that appear repeatedly in demand responses. Participant specific allocation relevant securities are selected from cumulative report activity. Let
\begin{equation}
M_{ij}=\sum_{r\in\mathcal R_i}|x_{rj}|
\label{eqAppsecuritiescores}
\end{equation}
denote the activity score for name \(j\). The largest coordinates of \(M_i\) define \(S_i\). SL uses \(E_{S_i}\) as the package basis. FC uses the factor sleeve and the residual name sleeve through \([A,R_KE_{S_i}]\). NC removes completion, LR uses a lower rank factor guide, and SH perturbs the economic content of the factor guide. These variants are used only to identify the role of factor completion in Table \ref{tabMainCompletionRobustness}.

The surrogate model is estimated from two types of observations. Demand query rows fit projected marginal conditions, while value query rows fit exact surplus observations. The estimator is used only to select later queries. Final welfare is evaluated from the report based winner determination problem in \eqref{eqReportWDP}. This distinction matters for the interpretation of the robustness results. Misspecification affects which packages are queried, but it does not redefine the welfare criterion once packages have been reported.

\subsection{Evaluation and Statistical Inference}
\label{appComputationInference}

The evaluation uses matched market seed cells. A cell fixes the market panel, retained universe, participant profile sequence, primitive shocks, contra side liquidity, and oracle benchmark. Each protocol is then run on the same cell. This design makes pairwise comparisons within cell comparisons and removes variation that would otherwise come from drawing different economies.

The welfare benchmark is the full information allocation in \eqref{eqOracle}. The implemented allocation is the report based allocation in \eqref{eqReportWDP}. Efficiency is reported as the oracle relative welfare measure in \eqref{eqEfficiency}. Candidate reports include no trade, demand query reports, the bridge value query report, and later value query reports. Efficiency ratios are reported for cells with a positive oracle benchmark.

Uncertainty intervals are computed by stratified bootstrap resampling over matched cells. Pairwise claims use within cell differences. When a table discusses several pairwise comparisons within the same panel, the reported inference uses Holm step down adjustment. This inference design matches the experimental question. The relevant comparison is whether one protocol improves welfare when it faces the same market, participants, and primitives as the alternative protocol.

\subsection{Pretrade Information Leakage}
\label{appLeakageRobustness}

The representation frontier in Figure \ref{figMainFrontier} treats pretrade disclosure as costly. The main text uses an active name pair leakage index because it is directly tied to the information contained in package messages. For representation \(a\), the index is
\begin{equation}
L_a^{\mathrm{active}}
=
\min
\left\{
1,
\omega_a
\frac{1}{nm}
\sum_{i=1}^{n}
\sum_{j=1}^{m}
\mathbf 1\{|d_{ij}^{a,\mathrm{final}}|>\epsilon\}
\right\}.
\label{eqAppLeakageMain}
\end{equation}
The multiplier \(\omega_a\) captures representation specific message informativeness. SL discloses exact name packages. SN also discloses exact securities but with a narrower same name structure. FC expresses packages through factor exposure and completed residual sleeves, so its messages are less directly informative about exact name intentions. Disclosure adjusted welfare is
\begin{equation}
\operatorname{AdjEff}^{pp}_a(c,L)
=
100\operatorname{Eff}_a-100cL .
\label{eqAppAdjustedEfficiency}
\end{equation}

Table \ref{eqRepresentationBreakEven} reports the robustness check that supports the frontier interpretation. Panel A changes the basis of the leakage index while holding the baseline informativeness multipliers fixed. Panel B changes the FC multiplier while holding the unadjusted welfare outcomes fixed. The table shows the range of disclosure environments in which FC becomes the disclosure adjusted welfare frontier because its lower message informativeness offsets its lower unadjusted efficiency.

\begin{table}[t]
\caption{Break-even message informativeness for the basket frontier}
\label{eqRepresentationBreakEven}
\centering
\footnotesize
\begin{tabular}{lrrrr}
\toprule
\multicolumn{5}{l}{\textit{Panel A. Alternative leakage metrics at \(c=0.45\)}} \\
Leakage metric & SL disc. adj. welfare \(\%\) & FC disc. adj. welfare \(\%\) & SN disc. adj. welfare \(\%\) & Frontier \\
\midrule
Active pair & 55.04 & 68.60 & 56.52 & FC \\
Effective name & 64.89 & 75.06 & 67.24 & FC \\
Quantity weighted & 55.17 & 71.98 & 65.08 & FC \\
Top name & 52.25 & 68.60 & 56.44 & FC \\
\midrule
\multicolumn{5}{l}{\textit{Panel B. Break-even FC informativeness at \(c=0.45\)}} \\
Leakage metric & FC multiplier & SL disc. adj. welfare \(\%\) & FC disc. adj. welfare \(\%\) & SN disc. adj. welfare \(\%\) \\
\midrule
Active pair & 0.30 & 55.04 & 75.35 & 56.52 \\
Active pair & 0.45 & 55.04 & 68.60 & 56.52 \\
Active pair & 0.60 & 55.04 & 61.85 & 56.52 \\
Active pair & 0.75 & 55.04 & 55.10 & 56.52 \\
Effective name & 0.30 & 64.89 & 79.66 & 67.24 \\
Effective name & 0.45 & 64.89 & 75.06 & 67.24 \\
Effective name & 0.60 & 64.89 & 70.46 & 67.24 \\
Effective name & 0.75 & 64.89 & 65.86 & 67.24 \\
Quantity weighted & 0.30 & 55.17 & 77.60 & 65.08 \\
Quantity weighted & 0.45 & 55.17 & 71.98 & 65.08 \\
Quantity weighted & 0.60 & 55.17 & 66.36 & 65.08 \\
Quantity weighted & 0.75 & 55.17 & 60.73 & 65.08 \\
Top name & 0.30 & 52.25 & 75.35 & 56.44 \\
Top name & 0.45 & 52.25 & 68.60 & 56.44 \\
Top name & 0.60 & 52.25 & 61.85 & 56.44 \\
Top name & 0.75 & 52.25 & 55.10 & 56.44 \\
\bottomrule
\end{tabular}
\vspace{0.4em}
\begin{minipage}{0.92\linewidth}
\footnotesize
Notes. Panel A holds representation specific informativeness multipliers fixed. Panel B varies only the FC multiplier.
\end{minipage}
\end{table}

\subsection{Approximate Demand Responses}
\label{appDemandResponseRobustness}

Table \ref{tabAppDemandResponseRobustness} relaxes the exact optimizer assumption in the demand query step. The result is stable across moderate response approximation. SL remains the stronger unadjusted efficiency representation, while FC remains close enough to support the representation frontier in environments with material disclosure costs. The welfare object remains the allocation obtained from limited reports under a fixed information architecture.

\begin{table}[t]
\caption{Robustness to approximate demand responses}
\label{tabAppDemandResponseRobustness}
\centering
\footnotesize
\begin{tabular}{lrrr}
\toprule
Demand response specification & Label & Eff. \(\%\) & 95\% h.w. \\
\midrule
Exact optimizing response & SL & 96.75 & 0.95 \\
Exact optimizing response & FC & 88.16 & 0.74 \\
Mild stochastic response error & SL & 96.89 & 0.83 \\
Mild stochastic response error & FC & 89.21 & 0.76 \\
Moderate stochastic response error & SL & 95.74 & 0.74 \\
Moderate stochastic response error & FC & 89.63 & 0.79 \\
Large stochastic response error & SL & 89.30 & 0.78 \\
Large stochastic response error & FC & 86.79 & 0.73 \\
Mild regularized response & SL & 96.53 & 1.08 \\
Mild regularized response & FC & 88.40 & 0.67 \\
Moderate regularized response & SL & 97.20 & 0.89 \\
Moderate regularized response & FC & 89.36 & 0.75 \\
Strong regularized response & SL & 96.88 & 0.92 \\
Strong regularized response & FC & 89.93 & 0.69 \\
\bottomrule
\end{tabular}

\vspace{0.4em}
\begin{minipage}{0.92\linewidth}
\footnotesize
Notes. The approximation families preserve matched market seed cells and evaluate final welfare under primitive values. The stochastic response rows perturb demand choices around the exact optimizing response. The regularized response rows smooth the demand choice problem before reports are passed to the same allocation rule.
\end{minipage}
\end{table}